\title{The Kirillov picture for the Wigner particle}
\author{
J. M. Gracia-Bond\'ia$^1$,
F. Lizzi$^{2,3,4}$,
J. C. V\'arilly$^5$
and P. Vitale$^{2,3}$
\\ \\[12pt]
{\footnotesize $^1$ Departamento de F\'isica Te\'orica, Universidad de
Zaragoza, Zaragoza 50009, Spain}\\[3pt]
{\footnotesize $^2$ Dipartimento di Fisica ``E.~Pancini'',
Universit\`a di Napoli Federico II, Italy}\\[3pt]
{\footnotesize $^3$ INFN Sezione di Napoli, Italy}\\[3pt]
{\footnotesize $^4$ Departament de F\'isica Qu\`antica i Astrof\'isica
\& Institut de Ci\`encies del Cosmos, Universitat de Barcelona,
Spain}\\[3pt]
{\footnotesize $^5$ Escuela de Matem\'atica,
Universidad de Costa Rica, San Jos\'e 11501, Costa Rica}\\[3pt]
}
\date{May 3, 2018}
\DeclareMathOperator{\Ad}{Ad}       %% adjoint action
\DeclareMathOperator{\ad}{ad}       %% infml adjoint action
\DeclareMathOperator{\Coad}{Coad}   %% coadjoint action
\DeclareMathOperator{\coad}{coad}   %% infml coadjoint action
\DeclareMathOperator{\sech}{sech}   %% hyperbolic secant
\newcommand{\al}{\alpha}            %% short for \alpha
\newcommand{\bt}{\beta}             %% short for \beta
\newcommand{\dl}{\delta}            %% short for (Dirac's) delta
\newcommand{\eps}{\varepsilon}      %% short for \varepsilon
\newcommand{\ga}{\gamma}            %% short for \gamma
\newcommand{\ka}{\kappa}            %% short for \kappa
\newcommand{\La}{\Lambda}           %% short for \Lambda
\newcommand{\la}{\lambda}           %% short for \lambda
\newcommand{\om}{\omega}            %% short for \omega
\newcommand{\sg}{\sigma}            %% short for \sigma
\renewcommand{\th}{\theta}          %% short for \theta
\newcommand{\ze}{\zeta}             %% short for \zeta
\newcommand{\bR}{\mathbb{R}}        %% real numbers
\newcommand{\bS}{\mathbb{S}}        %% S in `S-matrix', or sphere
\newcommand{\sL}{\mathcal{L}}       %% Lorentz group
\newcommand{\sO}{\mathcal{O}}       %% coadjoint orbit
\newcommand{\sP}{\mathcal{P}}       %% Poincar\'e group
\newcommand{\pl}{\mathfrak{p}}      %% Poincar\'e Lie algebra
\newcommand{\del}{\partial}         %% short for \partial
\newcommand{\downto}{\downarrow}    %% limit from the right
\newcommand{\hel}{\lambda}          %% helicity
\newcommand{\less}{\setminus}       %% set difference
\newcommand{\lt}{\triangleright}    %% left action
\newcommand{\unl}{\underline}       %% short for  \underline
\newcommand{\up}{{\mathord{\uparrow}}} %% orthochronous
\newcommand{\upto}{\uparrow}        %% limit from the left
\newcommand{\w}{\wedge}             %% exterior product
\newcommand{\x}{\times}             %% cartesian product or cross
\renewcommand{\.}{\cdot}            %% dot product
\newcommand{\half}{{\mathchoice{\thalf}{\thalf}{\shalf}{\shalf}}}
\newcommand{\shalf}{{\scriptstyle\frac{1}{2}}} %% tiny* fraction  1/2
\newcommand{\thalf}{\tfrac{1}{2}}   %% small* fraction  1/2
\bmdefine{\aaa}{a}          %% trivector a  ( = \bm{a}, etc.)
\bmdefine{\bbb}{b}          %% trivector b
\bmdefine{\eee}{e}          %% trivector e
\bmdefine{\ggg}{g}          %% trivector g
\bmdefine{\GG}{G}           %% trivector G
\bmdefine{\ii}{i}           %% trivector i
\bmdefine{\jj}{j}           %% trivector j
\bmdefine{\JJ}{J}           %% trivector J
\bmdefine{\kk}{k}           %% trivector k
\bmdefine{\KK}{K}           %% trivector K
\bmdefine{\lll}{l}          %% trivector l
\bmdefine{\LL}{L}           %% trivector L   
\bmdefine{\mm}{m}           %% trivector m
\bmdefine{\nn}{n}           %% trivector n
\bmdefine{\PP}{P}           %% trivector P
\bmdefine{\pp}{p}           %% trivector p
\bmdefine{\QQ}{Q}           %% trivector Q
\bmdefine{\qq}{q}           %% trivector q
\bmdefine{\RR}{R}           %% trivector R
\bmdefine{\rr}{r}           %% trivector r
\bmdefine{\sss}{s}          %% trivector s 
\bmdefine{\SSS}{S}          %% trivector S 
\bmdefine{\ttt}{t}          %% trivector t
\bmdefine{\TT}{T}           %% trivector T
\bmdefine{\uu}{u}           %% trivector u
\bmdefine{\UU}{U}           %% trivector U
\bmdefine{\vv}{v}           %% trivector v
\bmdefine{\VV}{V}           %% trivector V
\bmdefine{\ww}{w}           %% trivector w
\bmdefine{\WW}{W}           %% trivector W
\bmdefine{\xx}{x}           %% trivector x
\bmdefine{\XX}{X}           %% trivector X
\bmdefine{\yy}{y}           %% trivector y
\bmdefine{\zz}{z}           %% trivector z
\bmdefine{\zero}{0}         %% trivector zero
\bmdefine{\omom}{\omega}    %% trivector \omega
\bmdefine{\ssg}{\sigma}     %% trivector \sigma
\newcommand{\hideqed}{\renewcommand{\qed}{}} %% to suppress `\qed'
\newcommand{\set}[1]{\{\,#1\,\}}     %% set notation
\newcommand{\word}[1]{\quad\text{#1}\quad} %% well-spaced word(s)
\def\wick:#1:{\,\mathopen:#1\mathclose:\,} %% Wick-ordered operator
\newcommand{\pd}[2]{\frac{\partial#1}{\partial#2}} %% partial deriv
\def\duo<#1,#2>{\langle#1,#2\rangle} %% duality pairing <x,y>
\def\lLa^#1_#2{\Lambda^{#1}{}_{#2}}  %% \lLa^\mu_\nu
\def\ltLa_#1^#2{\Lambda_{#1}{}^{#2}} %% \ltLa_\rho^\ka
\def\epsi^#1_#2{\eps^{#1}{}_{\!#2}} %%% \eps^{ab}_c / \eps^a_{bc}
\def\epsii_#1^#2{\eps_{#1}{}^{\!#2}} %%% \eps_{ab}^c / \eps_a^{bc}
\theoremstyle{plain}
\newtheorem{thm}{Theorem}            %% Theorem 1
\newtheorem{prop}[thm]{Proposition}  %% Proposition 2
\newtheorem{lema}[thm]{Lemma}        %% Lemma 3
\newtheorem{corl}[thm]{Corollary}    %% Corollary 4
\theoremstyle{definition}
\numberwithin{equation}{section}
\renewcommand{\section}{\@startsection{section}{1}{\z@}%
                       {-3.5ex \@plus -1ex \@minus -.2ex}%
                       {2.3ex \@plus.2ex}%
                       {\normalfont\large\bfseries}}
\renewcommand{\subsection}{\@startsection{subsection}{2}{\z@}%
                        {-3.25ex \@plus -1ex \@minus -.2ex}%
                        {1.5ex \@plus .2ex}%
                        {\normalfont\normalsize\bfseries}}
\renewcommand{\subsubsection}{\@startsection{subsubsection}{3}{\z@}%
                        {-3.25ex \@plus -1ex \@minus -.2ex}%
                        {1.5ex \@plus .2ex}%
                        {\normalfont\small\bfseries}}
\begin{document}

\maketitle

\begin{abstract}
We discuss the Kirillov method for massless Wigner particles, usually
(mis)named ``continuous spin'' or ``infinite spin'' particles. These
appear in Wigner's classification of the unitary representations of
the Poincar\'e group, labelled by elements of the enveloping algebra
of the Poincar\'e Lie algebra. Now, the coadjoint orbit procedure
introduced by Kirillov is a prelude to quantization. Here we exhibit
for those particles the classical Casimir functions on phase space, in
parallel to quantum representation theory. A good set of position
coordinates are identified on the coadjoint orbits of the Wigner
particles; the stabilizer subgroups and the symplectic structures of
these orbits are also described.
\end{abstract}

% \S 1
\section{Introduction}
\label{sec:introibo}

The Wigner unitary representations (unirreps) of the Poincar\'e
group~\cite{Wigner39}, describing relativistic elementary particles,
count, in our opinion, among the most important mathematical objects
in the whole of physics. The Kirillov coadjoint orbit
picture~\cite{Kirillov04}, on the other hand, has been known (for over
a half century now) to link symplectic geometry with harmonic
analysis. It is therefore surprising that relatively little work has
been done so far to relate the Wigner unirreps with the phase space
orbits (homogeneous symplectic manifolds endowed with a canonical
Liouville measure) for the Poincar\'e group. The surprise only grows
when one realizes that their correspondence is one-to-one,
particularly for maximal dimension orbits, like the ones considered
here. A partial exception was the paper by J.~F.~Cari\~nena and two of
us~\cite{Ganymede}, devoted to phase spaces corresponding to
\textit{massive} particles. There, moreover, physical quantum averages
were computed by means of phase space integrals, in a Wigner--Moyal
approach.

It is our view that classical elementary relativistic phase spaces are
objects as intrinsic as -- and perhaps more readily understandable
than -- the corresponding Wigner unirreps. Observables defined on
elementary classical systems are useful proxies for operator
quantities, since relativistic kinematics is the same for classical
and quantum objects. In connection with the quantum field theory
procedures, it should be noted that quantization of the coadjoint
orbit picture runs parallel to induced representation
theory~\cite{Li93}, and allows to recover many of its results.

In this vein, we examine in this paper the phase space counterparts to
Wigner's so-called ``continuous spin'' or ``infinite spin''
representations. These are misnomers (there is no such thing as
continuous or infinite spin, although ``unbounded'' passes muster), so
we shall call them simply \textit{Wigner particles}, or WP for short.
Till recently the latter had received scant attention, being curtly
dismissed in textbooks. However, the path-breaking series of papers on
the WP properties by Schuster and Toro
\cite{SToro13a,SToro13b,SToro13c,SToro15} has awakened a lot of
interest~\cite{RempelF16,BekaertS17,Najafizadeh17}; we retain chiefly
\cite{Rehren17}, which introduces a quantum stress-energy-momentum
tensor for the~WP.

To the best of our knowledge, this article is the first study of the
coadjoint orbits for \textit{the} Wigner particle. We work out in
detail the Poisson bracket structure for these \textit{lightlike}
systems. A crucial task is to find, and to establish the properties
of, good position functions on the orbits.

The paper is organized as follows. Section~\ref{sec:maskirovka}
recalls the basics of the Kirillov procedure, as applied to the
Poincar\'e group $\sP$. We find the classical Casimir functions on
phase space, in striking parallelism with quantum representation
theory. Their values index the orbits corresponding to such
representations, on which a convenient set of coordinates for the
description of WPs is found. Subsection~\ref{ssc:gira-el-mundo}
dwells on an important kinematical question concerning massless 
particles.

In Section \ref{sec:shapely} the shape of the coadjoint orbits and
their stabilizer subgroups are investigated. A surprisingly simple
kinematics is thereby uncovered. We deal as well with the symplectic
structure of those orbits and investigate the covariance properties of
the coordinates under free motion. Section~4 is the Conclusion.

Readers are advised to check our conventions for relativistic
kinematics in Appendix~\ref{app:P-Lie-structure}, before tackling what
follows.

% \S 2
\section{The Kirillov program for the Poincar\'e group}
\label{sec:maskirovka}

The \textit{adjoint action} $\unl{\Ad}$ of $\sP$ on its Lie algebra
$\pl$ we compute as follows. The notation $\ad(X)Y := [X,Y]$ for 
$X,Y \in \pl$ leads to $\Ad(\exp X)Y = e^{\ad(X)}Y 
= Y + [X,Y] + \frac{1}{2!} [X,[X,Y]] + \cdots$. From this one can find
$\Ad(\exp X)Y$ whenever $X = -a^0 H$, $\aaa\.\PP$, $\al \mm\.\LL$ or
$\ze \nn\.\KK$, with $\mm$ and~$\nn$ denoting unit $3$-vectors; here
$Y = H \equiv P^0$, $P^a$, $L^a$ or~$K^a$ are the respective
generators of time and space translations, rotations and boosts.

For instance, if $X = \ze \nn\.\KK$, $Y = H$, then
\begin{align*}
\Ad(\exp(\ze \nn\.\KK)) H &= H + \ze[\nn\.\KK, H] 
+ \frac{\ze^2}{2!}[\nn\.\KK,[\nn\.\KK, H]] 
+ \frac{\ze^3}{3!}[\nn\.\KK, [\nn\.\KK, [\nn\.\KK, H]]] + \cdots 
\\
&= H - \ze \nn\.\PP + \frac{\ze^2}{2!} H
- \frac{\ze^3}{3!} \nn\.\PP + \cdots
= H \cosh\ze - \nn\.\PP \sinh\ze.
\end{align*}
In this way one obtains Table~\ref{tbl:Ad-action}, exhibiting the
adjoint action of~$\sP$ in a perspicuous manner. Explicitly for the
rotation of angle $\al$ around the axis determined by the unit
vector~$\mm$, acting on a vector~$\vv$:
\begin{equation}
R_{\al,\mm}\vv = \vv\,\cos\al + \mm \x \vv \,\sin\al
+ (\mm\cdot\vv) \mm(1 - \cos\al).
\label{eq:widening-gyre} % (2.1)
\end{equation}

\begin{table}[tbp]
\centering
\caption{The adjoint action $\Ad(\exp X)Y$}
\vspace{6pt}
\begin{tabular}{c|llll}
\hline\hline
$Y \diagdown X$ 
& $-a^0 H$ & $\aaa\.\PP$ & $\al \mm\.\LL$ & $\ze \nn\.\KK$
\rule{0pt}{12pt} %% (spacing trick)
\\[\jot] 
\hline
$H$ & $H$ & $H$ & $H$ & $H \cosh\ze - \nn\.\PP \sinh\ze$
\rule{0pt}{12pt}
\\[\jot]
$\PP$ & $\PP$ & $\PP$ & $R_{\al,\mm}^{-1} \PP$
& $\PP - H \nn \sinh\ze + (\nn\.\PP) \nn(\cosh\ze - 1)$
\\[\jot]
$\LL$ & $\LL$ & $\LL - \aaa\x\PP$ & $R_{\al,\mm}^{-1} \LL$
& $\LL \cosh\ze - \nn\x\KK \sinh\ze - (\nn\.\LL) \nn (\cosh\ze - 1)$
\\[\jot]
$\KK$ & $\KK - a^0\PP$ & $\KK + H\aaa$ & $R_{\al,\mm}^{-1} \KK$
& $\KK \cosh\ze + \nn\x\LL \sinh\ze - (\nn\.\KK)\nn (\cosh\ze - 1)$
\\[\jot]
\hline\hline
\end{tabular}
\label{tbl:Ad-action} % 1
\end{table}

\medskip

The \textit{coadjoint action} $\unl{\Coad}$ of $\sP$ on elements of
the Lie coalgebra~$\pl^*$,
$$
\duo<\Coad(\exp X)u, Y> := \duo<u, \Ad(\exp(-X))Y>
\word{for} u \in \pl^*,
$$
can now be derived immediately. Let $h$ be the linear coordinate on
$\pl^*$ associated to~$H$, and similarly let $p^a,l^a,k^a$ be the
coordinates associated to $P^a,L^a,K^a$ ($a = 1,2,3$). The action is
given in these coordinates by Table~\ref{tbl:Coad-action}.

\begin{table}[tbp]
\centering
\caption{The coadjoint action $\Coad(\exp X) y$}
\vspace{6pt}
\begin{tabular}{c|llll}
\hline\hline
$y \diagdown X$ 
& $-a^0 H$ & $\aaa\.\PP$ & $\al \mm\.\LL$ & $\ze \nn\.\KK$
\rule{0pt}{12pt}
\\[\jot]
\hline
$h$ & $h$ & $h$ & $h$ & $h \cosh\ze + \nn\.\pp \sinh\ze$
\rule{0pt}{12pt}
\\[\jot]
$\pp$ & $\pp$ & $\pp$ & $R_{\al,\mm}\,\pp$
& $\pp + h \nn \sinh\ze + (\nn\.\pp) \nn (\cosh\ze - 1)$
\\[\jot]
$\lll$ & $\lll$ & $\lll + \aaa\x\pp$ & $R_{\al,\mm}\,\lll$
& $\lll \cosh\ze + \nn\x\kk \sinh\ze - (\nn\.\lll)\nn (\cosh\ze - 1)$
\\[\jot]
$\kk$ & $\kk + a^0\pp$ & $\kk - h\aaa$ & $R_{\al,\mm}\,\kk$
& $\kk \cosh\ze - \nn\x\lll \sinh\ze - (\nn\.\kk) \nn (\cosh\ze - 1)$
\\[\jot]
\hline\hline
\end{tabular}
\label{tbl:Coad-action} % 2
\end{table}

We shall need the natural Lie--Poisson bracket on~$\pl^*$: given 
$f \in C^\infty(\pl^*)$, one can regard $df(u)$ as an element of the
Lie algebra, and one obtains:
\begin{equation}
\{f, g\}(u) := \duo<u, [df(u), dg(u)]>
= c^{\al\bt}{}_{\!\ga} \pd{f(u)}{u^\al}\, \pd{g(u)}{u^\bt}\, u^\ga,
\label{eq:Lie-Poisson} % (2.2)
\end{equation}
where the $c^{\al\bt}{}_{\!\ga}$ are the structure constants of~$\pl$.
Therefore, taking $(h,\pp,\lll,\kk)$ as cartesian coordinates
on~$\pl^*$, their Poisson brackets are given directly by the
commutation relations \eqref{eq:comm-relns} among the corresponding
Lie algebra generators. For reference:%
\footnote{For the problem at hand, the nonrelativistic notation, which
separates rotations from boosts and time from space translations, is
more transparent.}
\begin{alignat}{3}
\{l^a,l^b\} &= \epsi^{ab}_c\, l^c,  \qquad 
& \{l^a,k^b\} &= \epsi^{ab}_c\, k^c,  \qquad
& \{k^a,k^b\} &= -\epsi^{ab}_c\, l^c,
\notag \\
\{l^a, p^b\} &= \epsi^{ab}_c\, p^c,  \qquad 
& \{p^b, k^a\} &= \dl^{ab} h, \qquad
& \{h, k^a\} &= p^a.
\label{eq:poisson-basics} % (2.3)
\end{alignat}

% \S 2.1
\subsection{The Casimir functions}
\label{sec:Russian-defense}

The Lie-Poisson bracket~\eqref{eq:Lie-Poisson} restricts to symplectic
structures on the coadjoint orbits foliating it. Generally speaking,
the orbits arise as level sets of two ``Casimir functions'' $C_1$,
$C_2$ on~$\pl^*$ that are invariant by the coadjoint action. These are
easy to obtain explicitly. Let $p = (h,\pp)$ be the energy-momentum
$4$-vector and $w = (w^0,\ww)$ the phase-space ``Pauli--Luba\'nski''
$4$-vector, defined by
$$
w^0 = \lll\.\pp;  \qquad  \ww = \kk \x \pp + h\,\lll.
$$

Clearly $p$ and $w$ are orthogonal in the Minkowski sense: $(pw) = 0$.
{}From Table~\ref{tbl:Coad-action}, with a little work, one verifies
that $(w^0,\ww)$ transforms like $(h,\pp)$ under the coadjoint action.
In particular, under the boost $\Coad(\exp(\ze\nn\.\KK))$:
\begin{align*}
w^0 &\mapsto w^0 \cosh\ze + \nn\.\ww \sinh\ze,
\\
\ww &\mapsto \ww + w^0 \nn \sinh\ze + (\nn\.\ww) \nn (\cosh\ze - 1).
\end{align*}

\begin{lema} % 1
\label{lm:Poisson-PL}
The Poisson brackets of the components of $w$ with the basic variables
are given by:
\begin{alignat}{3}
\{h, w^\mu\} &= 0,
& \{k^a, w^0\} &= - w^a,
& \{l^a, w^0\} &= 0,
\notag \\
\{p^a, w^\mu\} &= 0, \qquad
& \{k^a, w^b\} &= - \dl^{ab} w^0, \qquad
& \{l^a, w^b\} &= \epsi^{ab}_c w^c;
\label{eq:Poisson-PL} % (2.4)
\end{alignat}
and among the components, the brackets are:
\begin{equation}
\{w^0, w^a\} = (\ww \x \pp)^a,  \qquad
\{w^a, w^b\} = \epsi^{ab}_c (h w^c - w^0 p^c).
\label{eq:PL-brackets} % (2.5)
\end{equation}
\end{lema}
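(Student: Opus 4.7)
Every bracket is obtained by direct Leibniz expansion from the canonical relations~\eqref{eq:poisson-basics} together with the definitions $w^0 = \lll\.\pp$ and $\ww = \kk\x\pp + h\,\lll$.

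The first block~\eqref{eq:Poisson-PL} is immediate line by line. For example,
$\{p^a, w^0\} = \{p^a, l^b\}\,p^b + l^b\{p^a, p^b\} = \eps^{abc} p^c p^b = 0$
by antisymmetry; and
$\{k^a, w^0\} = \{k^a, l^b\}\,p^b + l^b\{k^a, p^b\} = \eps^{abc} k^c p^b - h\, l^a = -w^a$
after swapping dummy indices; the remaining entries follow the same two-line pattern. A useful shortcut: the Lie--Poisson bracket of a linear function $f_X(u) = \duo<u,X>$ with any $g$ equals $-(d/dt)|_{t=0}\,g(\Coad(\exp tX)u)$, so the nine entries in~\eqref{eq:Poisson-PL} also drop out by differentiating at the identity the transformation laws of $(w^0, \ww)$ displayed just above the lemma, together with the obvious translation invariance (under $\aaa\.\PP$, $\lll\mapsto\lll+\aaa\x\pp$ and $\kk\mapsto\kk-h\aaa$, so $w^0$ is fixed because $(\aaa\x\pp)\.\pp = 0$ and $\ww$ because the two $h\,\aaa\x\pp$ contributions cancel; under $-a^0 H$, $\kk\mapsto\kk+a^0\pp$ and $\pp\x\pp=0$).

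The second block~\eqref{eq:PL-brackets} requires honest computation. For $\{w^0, w^a\}$, expand
\[
\{\lll\.\pp,\ \eps^{acd} k^c p^d + h\, l^a\}
\]
by Leibniz; the $h\,l^a$ piece collapses after a dummy relabel using antisymmetry of~$\eps$, and the remainder produces three doubly-contracted $\eps$-sums, each reducible via
\[
\eps^{ijk}\eps^{ilm} = \dl^{jl}\dl^{km} - \dl^{jm}\dl^{kl}.
\]
Collecting leaves $(\kk\.\pp)\,p^a - |\pp|^2 k^a + h\,(\lll\x\pp)^a$, which one further contraction identifies with $(\ww\x\pp)^a$. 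The bracket $\{w^a, w^b\}$ splits into four bilinear pieces $\{\eps kp, \eps kp\}$, $\{\eps kp, hl\}$, $\{hl, \eps kp\}$ and $\{hl, hl\}$; the commutator $\{k^c, k^e\} = -\eps^{cem} l^m$, processed through a triple-$\eps$ contraction, supplies the $h\, w^c$ term, while the $\{p^e, k^c\} = \dl^{ec}h$ and $\{h, k^c\} = p^c$ pairings conspire into $-w^0 p^c$ after a cascade of cancellations.

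The sole obstacle is bookkeeping: each bracket in~\eqref{eq:PL-brackets} spawns a dozen or more terms, and every identification leans on the $\eps\eps$--$\dl\dl$ contraction identity plus the cyclic symmetry $\eps^{ijk}=\eps^{jki}$ to relabel dummy indices. A disciplined, line-by-line tabulation keeps the signs under control.
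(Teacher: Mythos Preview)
Your proposal is correct and follows the same strategy as the paper's proof: direct Leibniz expansion from the relations~\eqref{eq:poisson-basics}. The one economy you miss is that for the second block~\eqref{eq:PL-brackets} the paper does not expand both entries into basic variables; it expands only the second slot, writing for instance $\{w^0, w^a\} = \{w^0,\eps^a_{bd} k^b p^d + h l^a\}$ and then invoking the already-established first block~\eqref{eq:Poisson-PL} to get $\eps^a_{bd} w^b p^d = (\ww\x\pp)^a$ in one line, and similarly $\{w^a, w^b\} = \eps^b_{cd}\,\dl^{ac} w^0 p^d + \eps^{ab}_c h w^c$ immediately. This sidesteps the dozen-term bookkeeping and the triple-$\eps$ contractions you describe, though your brute-force route (and your coadjoint-action shortcut for the first block) are perfectly valid alternatives.
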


\begin{proof}
By direct computation, using~\eqref{eq:poisson-basics}.
\end{proof}

\begin{prop} % 2
\label{pr:Casimir-oynoveo}
The Casimir functions we seek are
$$
C_1 := (pp) = h^2 - |\pp|^2, \qquad 
C_2 := (ww) = (\lll\.\pp)^2 - |\kk\x\pp + h\lll|^2.
$$
\end{prop}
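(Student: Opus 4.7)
The plan is to verify that both $C_1$ and $C_2$ Poisson-commute with each basic coordinate $h, p^a, l^a, k^a$; by the Lie--Poisson formula \eqref{eq:Lie-Poisson} this is equivalent to coadjoint invariance, which is exactly what qualifies $C_1, C_2$ as Casimir functions on $\pl^*$. At most eight bracket calculations then suffice, and for both functions the brackets with the translation generators $h$ and $p^a$ are manifest from \eqref{eq:poisson-basics} and Lemma~\ref{lm:Poisson-PL}.

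For $C_1 = h^2 - |\pp|^2$, only the rotation and boost generators require any work. From \eqref{eq:poisson-basics}, the bracket with $l^a$ reduces by Leibniz to the antisymmetric contraction $\epsi^{ab}_c\, p^b p^c$, which vanishes, while the bracket with $k^a$ produces the cancellation $2h\,p^a - 2 p^a\,h = 0$.

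For $C_2 = (w^0)^2 - |\ww|^2$, I would apply Leibniz and then feed in Lemma~\ref{lm:Poisson-PL}. The brackets with $h$ and $p^a$ vanish termwise because $\{h, w^\mu\} = \{p^a, w^\mu\} = 0$. The bracket with $l^a$ becomes $-2\epsi^{ab}_c\, w^b w^c = 0$, and the bracket with $k^a$ collapses to $-2 w^0 w^a + 2 w^0 w^a = 0$. These are exactly the same cancellations as for $C_1$, with the pair $(h, \pp)$ replaced by $(w^0, \ww)$; the parallelism reflects the observation, made just before the lemma, that $w = (w^0, \ww)$ transforms under $\Coad$ like the energy--momentum $4$-vector.

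A geometric alternative bypasses these computations entirely: Table~\ref{tbl:Coad-action} exhibits $(h, \pp)$ as a Lorentz $4$-vector under $\Coad$, so $C_1 = (pp)$ is immediately invariant, and the same argument applies to $C_2 = (ww)$ once $(w^0, \ww)$ is known to transform identically. I expect no conceptual obstacle; the only points to keep straight are the sign conventions linking \eqref{eq:Lie-Poisson} with the coadjoint action, and (if one takes the geometric route) the translation-invariance of $w$, which rests on $(\aaa\x\pp)\.\pp = 0$ and the cancellation of the two $h\,\aaa\x\pp$ terms in $\ww = \kk\x\pp + h\lll$.
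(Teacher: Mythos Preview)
Your proposal is correct and follows essentially the same route as the paper: use Leibniz together with the brackets of Lemma~\ref{lm:Poisson-PL} to see that $\{l^a,C_2\}$ collapses to an antisymmetric contraction and $\{k^a,C_2\}$ to a difference of equal terms, with $C_1$ handled identically using \eqref{eq:poisson-basics}. The only cosmetic difference is that the paper's proof spells out the detailed verifications of the relations \eqref{eq:Poisson-PL} and \eqref{eq:PL-brackets} (i.e.\ it effectively proves Lemma~\ref{lm:Poisson-PL} in full there), whereas you simply cite the lemma --- which is entirely appropriate.
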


\begin{proof}
The relations \eqref{eq:Poisson-PL} follow from
\eqref{eq:poisson-basics}. That $\{h,w^0\} = 0$ is clear; the others
are:
\begin{align*}
\{h, w^a\} &= \{h, \epsi^a_{bc} k^b p^c \}
= \epsi^a_{bc} p^b p^c = 0,
\\
\{p^a, w^0\} &= \dl_{bd} \{p^a, l^b\} p^d
= \dl_{bd} \epsi^a_{bc} p^c p^d = 0, 
\\
\{p^a, w^b\} &= \{p^a, \epsi^b_{cd} k^c p^d + h l^b \}
= \epsi^b_{ad} h p^d + \epsi^{ab}_c h p^c = 0,
\\
\{k^a, w^0\} &= \{k^a, \dl_{bd} l^b p^d\} 
= \dl_{bd} (\{k^a, l^b\} p^d + l^b \{k^a, p^d\})
= \epsi^a_{bc} k^c p^b - h l^a = - w^a, 
\\
\{k^a, w^a\} &= \{k^a, \epsi^a_{bd} k^b p^d + h l^a \}
= - \epsi^a_{bd} \epsi^{ab}_c l^c p^d  - l^a p^a
= - \dl_{cd} l^c p^d = - w^0,
\\
\{k^a, w^b\} &= \{k^a, \epsi^b_{cd} k^c p^d + h l^b \}
= - \epsi^b_{cd} \epsi^{ac}_e l^e p^d - \epsi^b_{ca} k^c h 
  - l^b p^a + \epsi^{ab}_c k^c h 
\\
&= l^b p^a - l^b p^a = 0 \quad\text{if } a \neq b,
\\
\{l^a, w^0\} &= \dl_{bd} \{l^a, l^b p^d\} 
= \epsi^a_{bc} (l^c p^b + l^b p^c) = 0,
\\
\{l^a, w^b\} &= \{l^a, (\kk \x \pp)^b\} + h \{l^a, l^b\} 
= \epsi^{ab}_c \bigl( (\kk \x \pp)^c + h l^c \bigr)
= \epsi^{ab}_c w^c;
\end{align*}
From these brackets, \eqref{eq:PL-brackets} follows easily:
\begin{align*}
\{w^0, w^a\} &= \{w^0, \epsi^a_{bd} k^b p^d + h l^a \}
= \epsi^a_{bd} w^b p^d = (\ww \x \pp)^a,
\\
\{w^a, w^b\} &= \{w^a, \epsi^b_{cd} k^c p^d + h l^b \}
= \epsi^b_{cd} \dl^{ac} w^0 p^d + \epsi^{ab}_c h w^c 
= \epsi^{ab}_c (h w^c - w^0 p^c).
\end{align*}
That $C_1$ is a Casimir hardly needs proof. From formulas 
\eqref{eq:Poisson-PL} it follows that
$$
\{k^a, C_2\} = 0 = \{l^a, C_2\};
$$
and Proposition \ref{pr:Casimir-oynoveo} is proved.
\end{proof}

% \S 2.2
\subsection{Searching for position coordinates}
\label{ssc:taking-position}

In order to find and study the orbits corresponding to the~WP, it is
natural to look for global \textit{position} functions.

Let us quickly review the \textit{massive} case, to better understand
the situation. That is, we restrict ourselves to orbits for which
$C_1 > 0$, writing $C_1 = m^2$ and we decide on
$h = +\sqrt{m^2 + |\pp|^2}$. Note that $C_1 \geq 0$ implies 
$C_2 \leq 0$. Let $\ka := (m,\zero)$ be the vertex of the forward
hyperboloid given by $p^2 = m^2$. Consider the standard Lorentz boost
$L_p$ which takes $(1,\zero)$ to $u := p/m$; its relation to
$\Coad(\exp(\ze\nn\.\KK))$ is given by
$$
\nn := \frac{\uu}{|\uu|} = \frac{\uu}{\sqrt{(u^0)^2 - 1}}; \qquad
u^0 =: \cosh\ze \quad (\ze \geq 0).
$$
One thus obtains
$$
L_p a = \biggl( \frac{ha^0 + \pp\.\aaa}{m}, \,
\aaa + \Bigl( \frac{a^0}{m} + \frac{\pp\.\aaa}{m(m + h)} \Bigr) \pp
\biggr).
$$
Now $0 = (pw) = (L_p^{-1} p \; L_p^{-1} w) = (\ka\,L_p^{-1} w)$. This
means that $L_p^{-1} w = (0, m\sss)$ for some $3$-vector~$\sss$. The
Casimir 
$C_2 = (ww) = (L_p^{-1} w \; L_p^{-1} w) = -m^2|\sss|^2 \leq 0$ is
constant on any orbit; thus, if $C_2 < 0$ we interpret $\sss$ as the
spin vector. From $(0, m\sss) = L_p^{-1} w$ one derives the relation:
\begin{equation}
m\sss = \ww - \frac{w^0}{m + h}\, \pp
= \ww - \frac{(\ww\.\pp)}{h(m + h)}\, \pp.
\label{eq:mass-spin} % (2.6)
\end{equation}
For fixed $C_1$ and $C_2$, an orbit $\sO_{ms+}$ has been obtained. One
naturally takes as coordinates on it the momenta~$\pp$ and spherical
coordinates arising from~$m\sss$. Three more come from a
\textit{position} triplet $\qq$, given by~\cite{Ganymede}:
\begin{equation}
\qq := - \frac{\kk}{h} - \frac{\pp \x \ww}{mh(m + h)} 
= - \frac{\kk}{h} - \frac{\pp \x \sss}{h(m + h)}\,.
\label{eq:take-a-position} % (2.7)
\end{equation}
The expressions of the $\pl^*$-coordinates $\lll,\kk$ in terms of the
$\sO_{ms+}$ coordinates $(\qq,\pp,\sss)$ over the orbit are:
$$
\kk = - h\qq - \frac{\pp \x \sss}{m + h}, \quad 
\lll = \qq \x \pp + \sss.
$$
Clearly $\sO_{ms+}$ is homeomorphic to $\bR^6 \x \bS^2$, with isotropy
(or stability) group isomorphic to $\bR \x SO(2)$ -- say, the subgroup
fixing $(\zero,\zero,\sss)$ generated by time translations
$\exp(-a^0 H)$ and rotations $R_{\al,\sss}$, see
\cite[Eqn.~(30)]{Ganymede}.%
\footnote{Coadjoint orbits always have even dimension
\cite{Kirillov04}; and their isotropy groups for maximal dimensional
orbits are always abelian~\cite{DufloV69}.}
The degenerate ``scalar'' case $C_2 = 0$ gives $6$-dimensional orbits
$\sO_{m0+}$, homeomorphic to~$\bR^6$.

Using \eqref{eq:Lie-Poisson} together with the commutation relations,
one verifies that $\set{q^a,p^a}$ are \textit{canonical} coordinates,
and that 
$\{s^a, s^b\} = \epsi^{ab}_c s^c$; $\{s^a, q^b\} = 0 = \{s^a, p^b\}$.
These coordinates, however, are not particularly useful. They do not
transform covariantly for $\sss \neq 0$; in \cite{Ganymede} they were
replaced by others that do~so. They are certainly useless to study the
massless limit. We shall come back to the question of different sets
of position coordinates repeatedly.

\medskip

We turn now to massless particles, the WP in particular. Over fifty
years ago, Wightman wrote a remarkable paper \cite{Wightman62} proving
that quantum spinning massless systems like the photon are not
localizable, in the sense that for them the action of the Euclidean
group cannot be realized on a set of three position coordinates in the
standard way. However, he assumes commutativity of those variables. In
the first chapter of his book~\cite{Schwinger70}, Schwinger disclosed
a view on relativistic position operators that can be understood as a
retort to Wightman's. In fact, allowing for noncommutativity, there
\textit{are} position operators for photons and other fixed-helicity
particles, as well as for the WP, with the correct transformation
properties.

It is both instructive and convenient in our phase-space context to
seek a position vector for the massive case with good limit properties
as $m \downto 0$. Thus we adopt a set of coordinates~$\rr$ suggested
by Schwinger's ideas,%
\footnote{A discussion closer to the original treatment is given in
Appendix~\ref{app:PL-Schwinger}.}
provided~by
\begin{equation}
\rr := - \frac{\kk}{h} + \frac{\pp \x \ww}{h^2(m + h)}\,.
\label{eq:pre-positioning} % (2.8)
\end{equation}
Notice that as $m \downto 0$, the right hand side of
Eqn.~\eqref{eq:mass-spin} becomes:
\begin{equation}
\ttt := \ww - \frac{(\ww\.\pp)\pp}{h^2} = \ww - \frac{w^0}{h} \pp\,.
\label{eq:tea-for-two} % (2.9)
\end{equation}
Now we consider the WP case, where by definition $C_1 = m^2 = 0$
and $C_2 = w^2 =: -\rho^2 < 0$. Note that, since $(wp) = 0$, this 
entails $p \nparallel w$ and therefore $\ttt \neq \zero$; also,
$\pp \neq \zero$ since $C_1 = 0$ but $w^2 \neq 0$. Moreover,
$(wp) = 0$ implies that $\ttt \perp \pp$ and $|\ttt|^2 = \rho^2$; and
thus $|(\pp/h) \x \ttt|^2 = \rho^2$, as well.%
\footnote{The parameter $\rho$ has the physical dimension of energy.
Orbits with different values of~$\rho$ correspond to different
particles.}

Position coordinates for the WP are given by
\begin{equation}
\rr := - \frac{\kk}{h} + \frac{\pp \x \ww}{h^3} 
= - \frac{\kk}{h} + \frac{\pp \x \ttt}{h^3} 
= \frac{(\pp\.\kk)\pp}{h^3} + \frac{\pp \x \lll}{h^2}\,.
\label{eq:now-in-position} % (2.10)
\end{equation}

Introduce the notation $\hel$ for the important \textit{helicity}
variable:
$$
\hel := \frac{\lll \. \pp}h = \frac{w^0}{h}\,.
$$
One readily obtains the basic coalgebra functions $\kk$, $\lll$ in
terms of the new $(\rr,\pp,\hel,\ttt)$ set of variables.%
\footnote{We have seen that the last of these is constrained by
$\ttt\.\pp = 0$ and $\ttt^2 = \rho^2$, so certainly these maximal
orbits are $8$-dimensional.}
\begin{equation}
\ww = \hel\pp + \ttt, \word{thus} 
\kk = -h\,\rr + \frac{\pp \x \ttt}{h^2}, \quad
\lll = h^{-1}(\ww - \kk \x \pp) = \rr \x \pp + \frac{\hel}{h}\,\pp.
\label{eq:keep-apart} % (2.11)
\end{equation}

We require to have available the Poisson brackets involving the new
variables. Remark first that $\{\hel, p^a\} = \{\hel, l^a\} = 0$, and
that the first relation in~\eqref{eq:PL-brackets} can be rewritten as:
$$
\{w^0, t^a\} = \{w^0, w^a\} = - (\pp \x \ww)^a = - (\pp \x \ttt)^a.
$$

\begin{lema} % 3
\label{lm:tea-table}
The helicity and the $3$-vector $\ttt$ have these Poisson brackets:
\begin{alignat}{2}
\{\hel, k^a\} &= t^a/h, 
& \{k^a, t^b\} &= t^a p^b/h, 
\notag \\
\{\hel, \ttt\} &= - \pp/h \x \ttt, \qquad
& \{l^a, t^b\} &= \epsi^{ab}_c\, t^c,
\notag \\
\{\hel, \pp/h \x \ttt\} &= \ttt,
& \{p^a, t^b\} &= \{t^a, t^b\} = 0.
\label{eq:tea-table} % (2.12)
\end{alignat}
\end{lema}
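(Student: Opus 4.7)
My plan is to derive every bracket in \eqref{eq:tea-table} by combining the already established Lemma~\ref{lm:Poisson-PL} with the Leibniz rule, applied to the definitions $\hel = w^0/h$ and $\ttt = \ww - \hel\,\pp$. The bracket of $\hel$ with any coordinate $f$ expands as
\[
\{\hel, f\} = \frac{1}{h}\{w^0, f\} - \frac{w^0}{h^2}\{h, f\},
\]
so the brackets with $\pp$, $\lll$, $h$ are immediate (they vanish), while $\{\hel, k^a\} = w^a/h - w^0 p^a/h^2 = (w^a - \hel p^a)/h = t^a/h$, which is the first identity.

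For the $\ttt$-brackets I would expand $t^b = w^b - \hel\,p^b$ via Leibniz. The calculations for $\{l^a,t^b\}$, $\{p^a,t^b\}$, and $\{k^a,t^b\}$ are then one-liners: in the first, the rotation-covariance $\{l^a,w^b\} = \epsi^{ab}_c w^c$ and $\{l^a,p^b\} = \epsi^{ab}_c p^c$ combine into $\epsi^{ab}_c t^c$; in the second, every elementary bracket vanishes; in the third, the two pieces $-\delta^{ab}w^0$ (from $\{k^a,w^b\}$) and $+\hel\,\delta^{ab}h$ (from $-\hel\{k^a,p^b\}$) cancel against each other, leaving only $-\{k^a,\hel\}p^b = t^a p^b/h$. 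For $\{\hel, t^a\}$ the term with $\{\hel, p^a\}$ drops, so one gets $\{w^0, w^a\}/h = (\ww\x\pp)^a/h = (\ttt\x\pp)^a/h = -(\pp/h \x \ttt)^a$, using that $\ww\x\pp = \ttt\x\pp$. The identity $\{\hel, \pp/h \x \ttt\} = \ttt$ then follows by a second Leibniz application combined with the BAC--CAB identity $\pp\x(\ttt\x\pp) = \ttt\,|\pp|^2 - \pp(\pp\.\ttt)$, using $\ttt\.\pp = 0$ (which comes from $(pw)=0$) and the null condition $|\pp|^2 = h^2$.

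The main obstacle is the last relation $\{t^a, t^b\} = 0$. Expanding $\{w^a - \hel p^a, w^b - \hel p^b\}$, the pure $\hel p$ cross-terms cancel, leaving
\[
\{t^a, t^b\} = \{w^a, w^b\} + \frac{1}{h}\bigl[(\ttt\x\pp)^a p^b - (\ttt\x\pp)^b p^a\bigr].
\]
Into the first term one substitutes $\{w^a,w^b\} = \epsi^{ab}_c(h w^c - w^0 p^c) = h\,\epsi^{ab}_c t^c$. The antisymmetric tensor in brackets can be dualised: viewing $(\ttt\x\pp)^a p^b - (\ttt\x\pp)^b p^a = \epsi^{ab}_c V^c$ with $V^c = \epsi^c_{\,ab}(\ttt\x\pp)^a p^b = [(\ttt\x\pp)\x\pp]^c$, and invoking BAC--CAB together with $\ttt\.\pp = 0$, one obtains $V = -|\pp|^2\,\ttt$. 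Thus the combination reads $(h^2 - |\pp|^2)/h \cdot \epsi^{ab}_c t^c$, which vanishes on the WP orbit because $C_1 = h^2 - |\pp|^2 = 0$. This cancellation, driven precisely by the masslessness, is the whole content of the last identity, and is the only step where the constraint $C_1=0$ enters.
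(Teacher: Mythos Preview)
Your proposal is correct and follows essentially the same route as the paper: both arguments expand $\hel = w^0/h$ and $t^b = w^b - \hel\,p^b$ via Leibniz, feed in the brackets from Lemma~\ref{lm:Poisson-PL}, and close with the same triple-product/BAC--CAB reduction showing $\{t^a,t^b\}$ is proportional to $(h^2 - |\pp|^2)\epsi^{ab}_c t^c$ and hence vanishes on the massless orbit. The only cosmetic differences are that the paper computes $\{\hel,\ttt\}$ by writing $\ww = \kk\x\pp + h\lll$ and using the just-derived $\{\hel,k^a\} = t^a/h$, whereas you go directly through $\{w^0,w^a\}/h$; and for $\{k^a,t^b\}$ the paper expands $w^0 p^b h^{-1}$ in one go rather than treating $\hel$ as a unit and invoking $\{\hel,k^a\}$.
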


\begin{proof}
The first relation comes from
$$
\{\hel, k^a\} = \{w^0 h^{-1}, k^a\} 
= \{w^0, k^a\} h^{-1} - w^0 h^{-1} \{h, k^a\} h^{-1}
= (w^a - \hel\,p^a)/h = t^a/h.
$$
The second one is given by
$$
\{k^a, t^b\} = \{k^a, w^b - w^0 p^b h^{-1}\} 
= -\dl^{ab} w^0 + w^a p^b h^{-1} + \dl^{ab} w^0 - w^0 p^b p^a h^{-2}
= t^a p^b/h.
$$
Furthermore,
\begin{align*}
\{\hel, \ttt\} &= \{\hel, \ww\} = \{\hel, \kk \x \pp\}
= \{\hel, \kk\} \x \pp = - \pp/h \x \ttt,
\\
\shortintertext{and thus also}
\{\hel, \pp/h \x \ttt\} &= \pp/h \x \{\hel, \ttt\}
= - \pp \x (\pp \x \ttt)/h^2 = (|\pp|^2/h^2) \,\ttt = \ttt,
\end{align*}
on account of $(pp) = 0$ and $\pp\.\ttt = 0$. From equations
\eqref{eq:poisson-basics} and \eqref{eq:Poisson-PL}, we see at once
that $\{l^a, t^b\} = \epsi^{ab}_c\,t^c$. Similarly, $\{p^a, t^b\} = 0$
follows from \eqref{eq:Poisson-PL} and~\eqref{eq:tea-for-two}.

Note that the second relation of~\eqref{eq:PL-brackets} can now be
shortened to $\{w^a, w^b\} = \epsi^{ab}_c\, h t^c$. In consequence,
the components of~$\ttt$ Poisson-commute:
\begin{align*}
\{t^a, t^b\} 
&= \{w^a, w^b\} - p^a \{\hel, w^b\} - \{w^a, \hel\} p^b
\\
&= \epsi^{ab}_c\,ht^c + p^a (\pp/h \x \ttt)^b - (\pp/h \x \ttt)^a p^b
\\
&= \epsi^{ab}_c \bigl( ht^c + (\pp \x (\pp \x \ttt))^c/h \bigr) = 0.
\tag*{\qed}
\end{align*}
\hideqed
\end{proof}

\begin{lema} % 4
\label{lm:taking-position}
The Poisson brackets involving the position variables are given by:
\begin{alignat}{4}
\{r^a, p^b\} &= \dl^{ab}, \quad
& \{r^a, h\} &= p^a/h, 
& \{\rr, w^0\} &= \ww_\parallel/h,
& \{r^a, \hel\} &= 0,
\notag \\
\{r^a, t^b\} &= - t^a p^b/h^2, \quad
& \{l^a, r^b\} &= \epsi^{ab}_c \, r^c, \quad
& \{r^a, r^b\} &= - \epsi^{ab}_c \,\hel p^c/h^3,
\label{eq:taking-position} % (2.13)
\end{alignat}
and also:
$$
\{r^a, w^b\} = \la \dl^{ab} - t^a p^b/h^2, \quad
\{r^a, k^b\} = - p^a r^b/h + \epsi^{ab}_c(\la p^c - t^c)/h^2
- 2 p^a(\pp \x \ttt)^b/h^4.
$$
\end{lema}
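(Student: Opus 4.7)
The plan is to verify each bracket by direct application of the Leibniz rule, starting from the representation $\rr = -\kk/h + \pp\x\ww/h^3$ (or one of the alternative forms in~\eqref{eq:now-in-position} when more convenient). The only inputs needed are the Poincar\'e brackets~\eqref{eq:poisson-basics}, the Pauli--Luba\'nski brackets~\eqref{eq:Poisson-PL}--\eqref{eq:PL-brackets} (in particular the compact form $\{w^a,w^b\} = \epsi^{ab}_c\,h t^c$ noted at the end of Lemma~\ref{lm:tea-table}), and Lemma~\ref{lm:tea-table} itself. Throughout, the null-shell identities $(pp) = 0$, $\pp\cdot\ttt = 0$, and $(pw) = 0$ (equivalently $\pp\cdot\ww = h w^0$) eliminate the bulk of the otherwise unwieldy terms.

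The simpler entries fall out almost immediately. Since $\pp$ Poisson-commutes with $\pp$, $\ww$ and~$h$, only the $-\kk/h$ summand contributes to $\{r^a, p^b\}$ and $\{r^a, h\}$, yielding $\dl^{ab}$ and $p^a/h$ respectively. Because $\kk,\pp,\ww$ are all $3$-vectors and $h$ is rotation-invariant, $\{l^a, r^b\} = \epsi^{ab}_c\,r^c$ is likewise immediate. For $\{\rr, w^0\} = \{\rr, \lll\cdot\pp\}$ I apply Leibniz using $\{r^a, l^c\} = \epsi^{ac}_d\,r^d$ and $\{r^a, p^c\} = \dl^{ac}$; substituting $\lll = \rr\x\pp + \hel\pp/h$ from~\eqref{eq:keep-apart} collapses the sum to $\hel\pp/h$, which is precisely $\ww_\parallel/h$ thanks to $(pw) = 0$. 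Finally $\{r^a, \hel\} = \{r^a, w^0/h\}$ vanishes by a one-line cancellation between $\{r^a, w^0\}/h$ and $(w^0/h^2)\{r^a, h\}$.

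For $\{r^a, w^b\}$ the $-\kk/h$ piece contributes $\hel\dl^{ab}$ via the entry $\{k^a, w^b\} = -\dl^{ab} w^0$ of~\eqref{eq:Poisson-PL}, while the $\pp\x\ww/h^3$ piece, after applying the $\eps$-contraction $\eps^{acd}\eps^{dbe} = \dl^{ab}\dl^{ce} - \dl^{ae}\dl^{cb}$ and invoking $\pp\cdot\ttt = 0$, contributes $-t^a p^b/h^2$. The formula $\{r^a, t^b\} = -t^a p^b/h^2$ then follows by writing $\ttt = \ww - w^0\pp/h$ and subtracting, the $\hel\dl^{ab}$ pieces cancelling cleanly.

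The principal obstacle is $\{r^a, r^b\}$. Expanding $\{-\kk/h + \pp\x\ww/h^3,\ -\kk/h + \pp\x\ww/h^3\}$ produces four cross-brackets whose constituents involve $\{k^a, k^b\} = -\epsi^{ab}_c\,l^c$, $\{k^a, w^b\} = -\dl^{ab} w^0$, $\{h, k^a\} = p^a$ and $\{w^c, w^d\} = \epsi^{cd}_e\,h t^e$; the task is to check that, after substituting $\lll = \rr\x\pp + \hel\pp/h$ to dispose of the surviving $\epsi^{ab}_c\,l^c$ piece and repeatedly invoking the null-shell identities, everything collapses to the clean $-\epsi^{ab}_c\,\hel p^c/h^3$. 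The closing bracket $\{r^a, k^b\}$ is obtained by the same machinery, now using the form $\rr = -\kk/h + \pp\x\ttt/h^3$ together with $\{t^d, k^b\} = -t^b p^d/h$ from Lemma~\ref{lm:tea-table}; beyond organising the output into the three listed summands, no new difficulty arises.
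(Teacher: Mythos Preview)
Your proposal is correct and follows essentially the same direct Leibniz-rule computation as the paper, with only minor tactical reshufflings (you derive $\{r^a,w^b\}$ first and then $\{r^a,t^b\}$, whereas the paper does the reverse; and for $\{r^a,r^b\}$ the paper absorbs the $\epsi^{ab}_c\,l^c$ term into $-\epsi^{ab}_c\,w^c/h^3$ via the definition of $\ww$ rather than via $\lll = \rr\times\pp + \hel\pp/h$). The one shortcut worth noting is that for the final bracket the paper writes $k^b = -h\,r^b + (\pp\times\ttt)^b/h^2$ and reuses the already-established $\{r^a,\cdot\}$ identities, which is tidier than expanding $r^a$ from scratch.
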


\begin{proof}
A few of these follow directly from the basic Poisson brackets
\eqref{eq:poisson-basics}. Indeed, since $\{h, p^b\} = 0$ and
$\{(\pp \x \ww)^a, p^b\} = 0$, the first reduces to
$\{r^a, p^b\} = -\{k^a, p^b\}/h = \dl^{ab}$. That is to say,
$\rr$ and~$\pp$ are conjugates, as is naturally demanded of a position
vector.

In like manner, $\{r^a, h\} = - \{k^a/h, h\} = p^a/h$. Also from
\eqref{eq:poisson-basics} we get:
$$
\{l^a, r^b\} = - \{l^a, k^b\}/h + \{l^a, (\pp \x \ww)^b\}/h^3
= \epsi^{ab}_c \bigl( - k^c/h + (\pp \x \ww)^c/h^3 \bigr)
= \epsi^{ab}_c \, r^c.
$$

Three more relations follow from~\eqref{eq:tea-table}:
\begin{align*}
\{\rr, w^0\} &= - \{\kk, w^0\}/h + \{\pp \x \ttt, w^0\}/h^3
= \ww/h + \pp \x (\pp \x \ttt)/h^3 = \ww/h - \ttt/h = \ww_\parallel/h,
\\
\{\rr, \hel\} &= - \{\kk, \hel\}/h + \{\pp/h \x \ttt, \hel\}/h^2
= - \ttt/h^2 + \ttt/h^2 = \zero,
\\
\{r^a, t^b\} &= - \{k^a/h, t^b\} = - t^a p^b/h^2.
\end{align*}
These in turn imply that
$$
\{r^a, w^b\} = \{r^a, \la p^b + t^b\} 
= \{r^a, \la\} p^b + \la \{r^a, p^b\} + \{r^a, t^b\}
= \la \dl^{ab} - t^a p^b/h^2.
$$

The last relation in~\eqref{eq:taking-position} becomes a sum of three
terms:
\begin{align*}
\{r^a, r^b\} &= \{k^a/h, k^b/h\} - \{k^a/h, (\pp \x \ttt)^b/h^3\}
- \{(\pp \x \ttt)^a/h^3, k^b/h\},
\\
\shortintertext{where}
\{k^a/h, k^b/h\}
&= \{k^a, k^b\} h^{-2} + k^a \{h^{-1}, k^b\} h^{-1}
+ k^b \{k^a, h^{-1}\} h^{-1}
\\
&= - \epsi^{ab}_c\, l^c/h^2 - k^a p^b/h^3 + k^b p^a/h^3
= - \epsi^{ab}_c\, w^c/h^3;
\\
- \{k^a/h, (\pp\x\ttt)^b/h^3\}
&= \epsi^b_{de} \{k^a, t^d p^e/h^3\}/h
\\
&= \epsi^b_{de} \bigl( \{k^a, t^d\} p^e/h^4 + t^d \{k^a, p^e\}/h^4 
- 3 t^d p^e \{k^a, h\}/h^5 \bigr)
\\
&= \epsi^b_{de} \bigl( t^a p^d p^e/h^4 - \dl^{ae} t^d/h^3
+ 3 t^d p^e p^a/h^5 \bigr)
\\
&= - \epsi^{ab}_c\, t^c/h^3 - 3(\pp \x \ttt)^b p^a/h^5;
\end{align*}
and similarly,
$- \{(\pp \x \ttt)^a/h^3, k^b/h\}
= - \epsi^{ab}_c\, t^c/h^3 - 3(\pp \x \ttt)^a p^b/h^5$. Therefore,
$$
\{r^a, r^b\} 
= - \epsi^{ab}_c \bigl( h^2 w^c + 2 h^2 t^c 
+ 3(\pp \x (\pp \x \ttt))^c \bigr)/h^5
= \epsi^{ab}_c (t^c - w^c)/h^3 
= - \epsi^{ab}_c \, \hel p^c/h^3.
$$

Using Eqs.~\eqref{eq:keep-apart} and~\eqref{eq:taking-position}, it
now follows that
\begin{align*}
\{r^a, k^b\} &= -\{r^a, h r^b\} + \{r^a, \epsi^b_{de}\, p^d t^e/h^2\}
\\
&= - \{r^a, h\} r^b - h \{r^a, r^b\} 
+ \epsi^b_{de} \bigl( \{r^a, p^d\} t^e/h^2 + p^d \{r^a, t^e\}/h^2
- 2 p^d t^e \{r^a, h\}/h^3 \bigr)
\\
&= - p^a r^b/h + \epsi^{ab}_c(\la p^c - t^c)/h^2
- 2 p^a(\pp \x \ttt)^b/h^4.
\tag*{\qed}
\end{align*}
\hideqed
\end{proof}

The formulas of the previous lemmata remain valid for the
fixed-helicity situations, by replacing $\ttt$ by~$\zero$ and
``freezing'' $\hel$ to a given value. The commutation relations
$\{r^a, r^b\} = - \epsi^{ab}_c\,\hel p^c/h^3$ were actually found for
the latter situation already in~\cite{BalachandranMSSZ92}.

% \S 2.3
\subsection{On the Wigner rotation}
\label{ssc:gira-el-mundo}

Let us return briefly to the massive case, $m > 0$. Let $w' = \La w$,
and consider accordingly
$$
(0,m\sss') := L_{\La p}^{-1} w'= L_{\La p}^{-1} \La L_p(0,m\sss).
$$
This transformation is just the Wigner rotation $g(\La,p)$. The spin's
axis of rotation is given by $\pp \x \nn$ for a boost $\La$ in the
direction of~$\nn$: when the boost is parallel to the momentum $\pp$,
the Wigner rotation is trivial. With
$\mm := (\pp \x \nn)/|\pp\x\nn|$, the formula is~\cite{Ganymede}:
\begin{align}
\sss'= g(\La,p) \sss &= R_{\dl,\mm} \sss 
= \sss + (\mm \x \sss) \sin\dl
- \bigl( \sss - (\mm\.\sss)\mm \bigr)(1 - \cos\dl),
\label{eq:spin-spun} % (2.14)
\\
\shortintertext{with angle~$\dl$ given by}
\sin\dl &= \frac{(m + h) \sinh\ze + (\nn\.\pp)(\cosh\ze - 1)}
{(m + h)(m + h')}\, |\pp\x\nn|,
\label{eq:Wigner-angle} % (2.15)
\end{align}
where $h' = h \cosh\ze + \nn\.\pp \sinh\ze$ by
Table~\ref{tbl:Coad-action}.

Under a boost in the direction of $\nn$, the momentum also turns
around $\pp \x \nn$. This is true in all generality: from the
coadjoint action for boosts -- see Table~\ref{tbl:Coad-action} -- with
$p' = \La p$, we obtain
$$ 
\pp \x \pp' 
= \bigl[ h\sinh\ze + (\nn\.\pp) (\cosh\ze - 1) \bigr]\, \pp \x \nn.
$$
Therefore, the component of $\pp'$ not along $\pp$ stays on the plane
perpendicular to $\pp \x \nn$. The sine of the rotation angle is given
by
\begin{equation}
\frac{|\pp \x \pp'|}{|\pp||\pp'|} 
= \frac{h\sinh\ze + (\nn\.\pp)(\cosh\ze - 1)}{|\pp||\pp'|}\,
|\pp \x \nn|.
\label{eq:rotation-angle} % (2.16)
\end{equation}
This is in general greater than the Wigner angle given
by~\eqref{eq:Wigner-angle}. Now comes a key point: although not all
the factors in its definition do~so, the Wigner rotation formula
itself makes perfect sense for $m = 0$. Namely, keeping in mind that
in this case $h = |\pp|$ and $h' = |\pp'|$,
formula~\eqref{eq:Wigner-angle} then perfectly matches
formula~\eqref{eq:rotation-angle}.

For good measure, we give next the brute-force proof that rotating
$\pp$ around $\pp\x\nn$ with rotation angle given by the massless
limit of \eqref{eq:Wigner-angle} yields the expected swing from a
boost on~$\pp$. We shall also need that
\begin{equation}
\sin\dl = \frac{h \sinh\ze + (\nn\.\pp)(\cosh\ze - 1)}{hh'}\,
|\pp\x\nn| \,, \qquad
\cos\dl = 1 - \frac{|\pp \x \nn|^2}{hh'}\,(\cosh\ze - 1),
\label{eq:rotation-angle-bis} % (2.17)
\end{equation}
in that limit. (Using $|\pp \x \nn|^2 = h^2 - (\nn \. \pp)^2$, these
expressions yield $\cos^2\dl + \sin^2\dl = 1$, so they do define an
angle~$\dl$.) Now, since the axis of rotation $\mm$ is perpendicular
to~$\pp$, one finds from Eqn.~\eqref{eq:spin-spun} that
\begin{align}
R_{\dl,\mm}(\pp) &= \pp \cos\dl + (\mm\x\pp) \sin\dl = \pp \cos\dl +
(h^2\nn - (\nn \. \pp)\pp)\,\frac{\sin\dl}{|\pp \x \nn|} \,,
\notag \\
&= \pp - \biggl\{ \frac{\cosh\ze - 1}{hh'}\,(h^2 - (\nn \. \pp)^2)
+ \frac{\nn \. \pp}{hh'} \bigl( h \sinh\ze 
+ \nn \. \pp\,(\cosh\ze - 1) \bigr) \biggr\} \pp 
\notag \\
&\qquad + \frac{h}{h'} \bigl( h \sinh\ze 
+ \nn \. \pp\,(\cosh\ze - 1) \bigr) \nn
\notag \\
&= \pp 
- \frac{1}{h'} \bigl( h(\cosh\ze - 1) + \nn \. \pp \sinh\ze \bigr) \pp
+ \frac{h}{h'} \bigl( h \sinh\ze + \nn \. \pp(\cosh\ze - 1) \bigr) \nn
\notag \\
&= \pp - \frac{h' - h}{h'} \pp + \frac{h}{h'} \bigl( h \sinh\ze + \nn
\. \pp \,(\cosh\ze - 1) \bigr) \nn
\notag \\
&= \frac{h}{h'} \bigl( \pp + h \nn \sinh\ze 
+ (\nn \. \pp)\nn (\cosh\ze - 1) \bigr) = \frac{h}{h'} \pp'.
\label{eq:roll-pointer} % (2.18)
\end{align}
Therefore, $\pp'/h' = R_{\dl,\mm}(\pp/h)$, with
$\mm = (\pp\x\nn)/|\pp\x\nn|$ and $\dl$ given 
by~\eqref{eq:rotation-angle-bis}.

% \S 3
\section{The shape of the orbits}
\label{sec:shapely}

A nagging worry for some readers may have been that, contrary to the
massive case, there is no distinguished point for the momentum in the
orbit of a WP, nor there is a continuous cross-section of the Lorentz
principal bundle over it~\cite{BoyaCS74}. Fortunately, the kinematics
of the WP saves the day.

% \S 3.1
\subsection{Coordinate transformations}
\label{ssc:on-the-move}

In order to understand the structure of the orbits corresponding to
Wigner particles, we need to examine the effect of boosts on
$\hel,\ttt,\rr$, which perhaps is not obvious \textit{a priori}.

As previously indicated, the $\Coad$ formulas for $(h,\pp)$ are good
for $(w^0,\ww)$. With $\nn$ the direction of a boost and $\ze$ its
parameter, we therefore obtain for the helicity (an invariant under
translations and rotations):
$$
\hel = \frac{w^0}{h} \longmapsto \frac{(w^0)'}{h'} 
\equiv \frac{w^0 \cosh\ze + \nn\.\ww \sinh\ze}
{h \cosh\ze + \nn\.\pp \sinh\ze} 
= \hel + \frac{\nn\.\ttt \tanh\ze}{h + \nn\.\pp \tanh\ze}.
$$
In particular $\hel$ is invariant in the case $\nn = \pp/h$. 

\medskip

Next, $\ww - w^0\pp/h = \ttt \mapsto \ttt'$, where
\begin{align}
\ttt' &:= \ttt + (\nn\.\ttt)\nn (\cosh\ze - 1) 
- \frac{\nn\.\ttt \tanh\ze}{h + \nn\.\pp \tanh\ze} 
\bigl( \pp + h \nn \sinh\ze + (\nn\.\pp)\nn (\cosh\ze - 1) \bigr)
\notag \\
&= \ttt + \nn\.\ttt \bigl( (\cosh\ze - 1)\nn - \sinh\ze\pp'/h' \bigr).
\label{eq:tea-is-served} % (3.1)
\end{align}
The expression \eqref{eq:tea-is-served} is linear in $\ttt$, with
coefficients depending on $\pp,\nn,\ze$. We verify it:
\begin{align*}
\ttt' = \frac{h'\ww' - (w^0)'\pp'}{h'}
&= \frac{1}{h'} \bigl[ (h \cosh\ze + \nn\.\pp \sinh\ze) 
\bigr( \ww + w^0\nn \sinh\ze + (\nn\.\ww)\nn(\cosh\ze - 1) \bigr) 
\\
&\quad - (w^0 \cosh\ze + \nn\.\ww \sinh\ze) 
\bigl( \pp + h\nn \sinh\ze + (\nn\.\pp)\nn(\cosh\ze - 1) \bigr)\bigr].
\end{align*}
The computation proceeds by systematically cancelling all terms
in~$w^0$. There remains
\begin{align*}
\ttt' &= \frac{1}{h'} \bigl[ 
h \cosh\ze \bigl( \ttt + (\nn\.\ttt)\nn (\cosh\ze - 1) \bigr)
- \bigl( h (\nn\.\ttt)\nn \sinh\ze - (\nn\.\pp)\ttt 
+ (\nn\.\ttt) \pp \bigr) \sinh\ze \bigr]
\\
&= \ttt + (\nn\.\ttt)\nn (\cosh\ze - 1) 
\\
&\qquad - \frac{1}{h'} \bigl[ 
h (\nn\.\ttt)\nn \sinh^2\ze + (\nn\.\ttt) \pp \sinh\ze
+ (\nn\.\pp) (\nn\.\ttt)\nn \sinh\ze(\cosh\ze - 1) \bigr]
\\
&= \ttt + (\nn\.\ttt)\nn (\cosh\ze - 1) 
- \frac{(\nn\.\ttt) \sinh\ze}{h'} \bigl[ \pp + h \nn \sinh\ze 
+ (\nn\.\pp) \nn (\cosh\ze - 1) \bigr].
\\
&= \ttt + (\nn\.\ttt)\bigl[(\cosh\ze - 1)\nn - \sinh\ze\pp'/h'\bigr].
\end{align*}

\begin{thm} % 5
\label{thm:moving-frame}
The mapping $\ttt \mapsto \ttt'$ is implemented by the \textbf{same}
rotation $R_{\dl,\mm}$ of formula~\eqref{eq:roll-pointer}, with axis
$\mm := (\pp \x \nn)/|\pp \x \nn|$ and angle $\dl$ given by
\eqref{eq:rotation-angle-bis}. For $\nn = \pm\pp/h$, the rotation is
trivial and $\ttt' = \ttt$.
\end{thm}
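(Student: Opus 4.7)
The plan is to identify the linear map $\ttt \mapsto \ttt'$ of \eqref{eq:tea-is-served} as the restriction to the plane $\pp^\perp$ of the spatial rotation $R_{\dl,\mm}$. That $\ttt$ is confined to the $2$-plane $\pp^\perp$ was established in Section~\ref{ssc:taking-position}, where it follows from $(pw)=0$ and $\ttt = \ww - (w^0/h)\pp$.

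I would dispose of the degenerate case $\nn = \pm\pp/h$ first. Then $\pp\x\nn = \zero$, so the axis $\mm$ and, via \eqref{eq:rotation-angle-bis}, the angle $\dl$ both collapse, rendering $R_{\dl,\mm}$ the identity. Simultaneously, $\nn\.\ttt = \pm\pp\.\ttt/h = 0$ (because $\ttt\perp\pp$), and \eqref{eq:tea-is-served} itself yields $\ttt' = \ttt$. The two sides agree trivially.

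For the nondegenerate case $\pp\x\nn \neq \zero$, my strategy is to pin down the map by three invariance properties which, together with linearity in $\ttt$, force it to be a rotation. (i) It sends $\pp^\perp$ into $(\pp')^\perp$: Lorentz invariance of $(pw)$ gives $w'\.p' = 0$, and combining this with $(p')^2 = 0$ inside $\ttt' = \ww' - (w^0/h)'\pp'$ yields $\ttt'\.\pp' = 0$. (ii) It preserves norms, $|\ttt'|^2 = \rho^2 = |\ttt|^2$, from the invariance of the Casimir $C_2 = -\rho^2$ combined with masslessness ($|\pp'|^2 = (h')^2$). (iii) It fixes $\mm$: since $\nn\.\mm = 0$ by construction of $\mm$, the shift term in \eqref{eq:tea-is-served} vanishes. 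Moreover $\mm$ lies in both $\pp^\perp$ and $(\pp')^\perp$, being orthogonal to $\pp$ and to $\nn$, hence to $\pp' = \pp + h\nn\sinh\ze + (\nn\.\pp)\nn(\cosh\ze-1)$ by Table~\ref{tbl:Coad-action}. Properties (i)--(iii) identify the map on $\pp^\perp$ with the restriction of a rotation about the axis $\mm$.

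Only the rotation angle remains to be pinned down. I would evaluate both sides on the single test vector $\eee := \mm\x\pp/|\pp|$, which together with $\mm$ spans $\pp^\perp$ orthonormally. The identities $\nn\.\eee = |\pp\x\nn|/|\pp|$ and $\mm\x\eee = -\pp/|\pp|$, plus the massless equality $h = |\pp|$, reduce the verification to matching the coefficients of $\nn$ and of $\pp$ in \eqref{eq:tea-is-served} against those in $\eee\cos\dl - \sin\dl\,\pp/|\pp|$. The main technical obstacle lies precisely here: clearing the common denominator $h'$, substituting the expressions for $\pp'$ and $h'$ from Table~\ref{tbl:Coad-action}, and applying the identity $\cosh^2\ze - \sinh^2\ze = 1$, one reproduces the $\sin\dl$ and $\cos\dl$ of \eqref{eq:rotation-angle-bis} exactly. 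The orientation of the rotation is then fixed automatically, completing the proof.
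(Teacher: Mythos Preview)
Your proof is correct and takes a genuinely different route from the paper's. The paper proceeds by brute force: it fixes the $\ttt$-dependent orthogonal frame $\{\pp/h,\ttt,\pp/h\x\ttt\}$ and verifies directly that the three scalar products $(\pp/h)\cdot\ttt'$, $\ttt\cdot\ttt'$, $(\pp/h\x\ttt)\cdot\ttt'$ match their counterparts with $R_{\dl,\mm}\ttt$, using only the explicit formula~\eqref{eq:tea-is-served} and the trigonometric identities~\eqref{eq:rotation-angle-bis}. You instead invoke the Lorentz invariance of $(pw)$ and of the Casimir $C_2$ to conclude, without computation, that the linear map is a norm-preserving bijection $\pp^\perp \to (\pp')^\perp$ fixing~$\mm$; this reduces the problem to a single explicit check on the $\ttt$-independent test vector $\eee = \mm\x\pp/h$. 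Your approach explains \emph{why} a rotation must appear (Casimir invariance forces isometry), and working in the fixed basis $\{\mm,\eee\}$ makes the linearity in $\ttt$ manifest. The paper's approach, by contrast, needs no structural preliminaries and is easier to audit line by line. One small caveat: your claim that properties (i)--(iii) already ``identify the map \dots\ with the restriction of a rotation about~$\mm$'' is slightly premature, since an isometry $\pp^\perp \to (\pp')^\perp$ fixing $\mm$ could a~priori send $\eee \mapsto -\eee'$ rather than $+\eee'$; but your subsequent explicit computation of $T(\eee)$ settles this, so the argument is complete.
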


\begin{proof}
To check the equality $\ttt' = R_{\dl,\mm}(\ttt)$, it is enough to
show that these two $3$-vectors have the same components with respect
to some $3$-vector basis. For that purpose, choose the orthogonal
moving frame $\{\pp/h, \ttt, \pp/h \x \ttt\}$. We claim that
\begin{align}
(\pp/h) \. \ttt' = (\pp/h) \. R_{\dl,\mm}\,\ttt, \quad
\ttt \. \ttt' = \ttt \. R_{\dl,\mm}\,\ttt, \quad
(\pp/h \x \ttt) \. \ttt' = (\pp/h \x \ttt) \. R_{\dl,\mm}\,\ttt.
\label{eq:gyro-scopic} % (3.2)
\end{align}

First, of all,
\begin{align*}
(\pp/h) \. \ttt' &= (\nn\.\ttt)(\nn\.\pp) \frac{\cosh\ze - 1}{h} 
- \frac{(\nn\.\ttt) \sinh\ze}{hh'} \bigl[ h^2 + h \nn\.\pp \sinh\ze
+ (\nn\.\pp)^2 (\cosh\ze - 1) \bigr]
\\
&= (\nn\.\ttt)(\nn\.\pp) \frac{\cosh\ze - 1}{hh'} \bigl( 
h' - h(\cosh\ze + 1) - \nn\.\pp \sinh\ze \bigr) 
- \frac{h}{h'} \nn\.\ttt \sinh\ze 
\\
&= - \frac{\nn\.\ttt}{h'} 
\bigl( h \sinh\ze + \nn\.\pp (\cosh\ze - 1) \bigr)
= - \frac{h\,\nn\.\ttt}{|\pp \x \nn|} \sin\dl 
\\
&= (\pp/h) \. (\mm \x \ttt) \sin\dl = (\pp/h) \. R_{\dl,\mm}\,\ttt.
\end{align*}

Next,
\begin{align*}
\ttt \. \ttt' &= \rho^2 + \frac{(\nn\.\ttt)^2}{h'} 
\bigl[ h'(\cosh\ze - 1) - h \sinh^2\ze 
- (\nn\.\pp) \sinh\ze(\cosh\ze - 1) \bigr]
\\
&= \rho^2 + \frac{(\nn\.\ttt)^2}{h'} 
\bigl( h \cosh\ze(\cosh\ze - 1) - h \sinh^2\ze \bigr)
= \rho^2 - \frac{h(\nn\.\ttt)^2}{h'} (\cosh\ze - 1),
\end{align*}
whereas
\begin{align*}
\ttt \. R_{\dl,\mm}\,\ttt 
&= \rho^2 - \bigl( \rho^2 - |\mm \x \ttt|^2 \bigr) (1 - \cos\dl)
= \rho^2 
- \frac{|(\pp \x \nn) \x \ttt|^2}{|\pp \x \nn|^2} (1 - \cos\dl)
\\
&= \rho^2 - \frac{h(\nn\.\ttt)^2}{h'} (\cosh\ze - 1),
\end{align*}
as claimed. We leave the proof of the third equation in
\eqref{eq:gyro-scopic} to the reader.
\end{proof}

Consider boosts along the special directions $\nn = \pp/h$,
$\ttt/\rho$ and $(\pp \x \ttt)/h\rho$. 
(a)~For $\nn = \pp/h$, we get $\ttt' = \ttt$ and $\pp'/h' = \pp/h$:
trivial rotation. 
(b)~For $\nn = \ttt/\rho$, we get $h' = h \cosh\ze$, 
$\mm = (\pp \x \ttt)/h\rho$, and
$$
\pp'/h' = (\pp/h) \sech\ze + (\ttt/\rho) \tanh\ze, \qquad
\ttt'/\rho = - (\pp/h) \tanh\ze + (\ttt/\rho) \sech\ze.
$$
(c)~For $\nn = (\pp \x \ttt)/h\rho$, again $h' = h \cosh\ze$ and
$\pp'/h' = (\pp/h) \sech\ze + (\ttt/\rho) \tanh\ze$, but now 
$\mm = -\ttt/\rho$ and $\ttt' = \ttt$. In case~(b) unsurprisingly 
there holds $\sin\dl = \tanh\ze$ and $\cos\dl = \sech\ze$.

\begin{corl} % 6
\label{cr:gyro-scope}
Under the Lorentz group action, the moving frame rotates as a
gyroscope:
$$
\pp/h \mapsto R_{\dl,\mm}(\pp/h), \qquad
\ttt \mapsto R_{\dl,\mm}\,\ttt, \qquad
\pp/h \x \ttt \mapsto R_{\dl,\mm}(\pp/h \x \ttt),
$$
where $\mm = (\pp \x \nn)/|\pp \x \nn|$.
\end{corl}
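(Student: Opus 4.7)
The plan is to reduce the three assertions to results already in hand. The first follows by a simple rearrangement of the final line of \eqref{eq:roll-pointer}; the second is exactly Theorem \ref{thm:moving-frame}; and the third is a purely algebraic consequence of the first two, together with the identity $R(\uu \x \vv) = R\uu \x R\vv$ valid for any $R \in SO(3)$.

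First I would read off from the last equality in \eqref{eq:roll-pointer} that $R_{\dl,\mm}(\pp) = (h/h')\,\pp'$. Since $R_{\dl,\mm}$ is $\bR$-linear, dividing both sides by $h$ yields $R_{\dl,\mm}(\pp/h) = \pp'/h'$, which is the first assertion. Next, Theorem \ref{thm:moving-frame} states verbatim that $\ttt \mapsto \ttt' = R_{\dl,\mm}\,\ttt$ under the same boost, with the same axis $\mm = (\pp \x \nn)/|\pp \x \nn|$ and the same angle $\dl$ given by \eqref{eq:rotation-angle-bis}, which settles the second assertion.

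For the third, rotations in $\bR^3$ are orientation-preserving isometries and therefore commute with the cross product. Applying the identity above with $\uu = \pp/h$, $\vv = \ttt$ and $R = R_{\dl,\mm}$, and inserting the first two assertions, I obtain
\[
(\pp/h \x \ttt)' = \pp'/h' \x \ttt' = R_{\dl,\mm}(\pp/h) \x R_{\dl,\mm}\,\ttt = R_{\dl,\mm}(\pp/h \x \ttt),
\]
which is exactly what is claimed. The crux of the corollary is the coincidence of the rotation axis and angle for both $\pp/h$ and $\ttt$, and that coincidence has already been secured inside Theorem \ref{thm:moving-frame} and the computation leading to \eqref{eq:roll-pointer}; no further obstacle arises, since all the substantive work sits in those two prior results.
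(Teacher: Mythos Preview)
Your argument is correct and matches the paper's approach: the displayed boost formulas follow from~\eqref{eq:roll-pointer}, Theorem~\ref{thm:moving-frame}, and the compatibility of $SO(3)$ with the cross product, exactly as you say. The paper's one-line proof adds only the observation that the rigid-rotation property also holds trivially under spatial rotations (Table~\ref{tbl:Coad-action}), and since rotations together with boosts generate the restricted Lorentz group, the gyroscope behaviour extends to the full Lorentz group action; you may wish to append that remark to cover the phrase ``Under the Lorentz group action'' in full.
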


\begin{proof}
The asserted result being true for all rotations and boosts, it is
\textit{ipso facto} true for all Lorentz transformations.
\end{proof}

We deem quite noteworthy this remarkable kinematical behaviour of
the~WP. It has as a consequence Wigner's original equations of
motion~\cite{Wigner47} in a first-quantized
formulation~\cite{Eunike}.

\begin{table}[tbp]
\centering
\caption{The coadjoint action on orbital coordinates}
\vspace{6pt}
\begin{tabular}{c|cccl}
\hline\hline
$u \diagdown X$ 
& $-a^0 H$ & $\aaa\.\PP$ & $\al \mm\.\LL$ & $\ze \nn\.\KK$
\rule{0pt}{12pt}
\\[\jot]
\hline
$\hel$ & $\hel$ & $\hel$ & $\hel$ & 
$\hel + \nn\.\ttt \tanh\ze/(h + \nn\.\pp \tanh\ze)$
\rule{0pt}{12pt}
\\[\jot]
$\pp$ & $\pp$ & $\pp$ & $R_{\al,\mm}\,\pp$
& $\pp + h \nn \sinh\ze + (\nn\.\pp) \nn (\cosh\ze - 1)$
\\[\jot]
$\rr$ & $\rr - a^0 \dfrac{\pp}{h}$ & $\rr + \aaa$ & $R_{\al,\mm}\,\rr$
& $\rr - \dfrac{\nn\.\rr}{h'} (\pp \sinh\ze + h\nn(\cosh\ze - 1))$
\\[2\jot]
&&&&\quad 
$+ \dfrac{\hel}{hh'}\,(\nn \x \pp) \sinh\ze + (\ttt$-dependent~term)
\\[2\jot]
$\ttt$ & $\ttt$ & $\ttt$ & $R_{\al,\mm}\,\ttt$ 
& $R_{\dl,\uu}\,\ttt \ \bigl( 
\text{with $\dl$ as given in \eqref{eq:rotation-angle-bis} and } 
\uu = \frac{\pp\x\nn}{|\pp\x\nn|} \bigr)$
\\[\jot]
\hline\hline
\end{tabular}
\label{tbl:Coad-orbital} % 3
\end{table}

We now examine the position coordinates: 
$-\kk/h + (\pp \x \ttt)/h^3 = \rr \mapsto \rr'$, where
\begin{align}
\rr' &:= \frac{-\kk\cosh\ze + (\nn\.\kk)\nn(\cosh\ze - 1)}{h'}
+ \frac{\nn\x\lll \sinh\ze}{h'} 
+ \frac{R_{\dl,(\pp\x\nn)/|\pp\x\nn|}(\pp/h \x \ttt)}{h'^2}
\notag \\
&= \frac{h\rr \cosh\ze - h(\nn\.\rr)\nn (\cosh\ze - 1)}{h'}
+ \frac{((\nn\.\pp) \rr - (\nn\.\rr) \pp) \sinh\ze}{h'}
+ \frac{\hel}{hh'}\,(\nn \x \pp) \sinh\ze
\notag \\
&\quad - \frac{\pp \x \ttt \cosh\ze}{h^2h'}
+ \frac{[\nn,\pp,\ttt] \nn (\cosh\ze - 1)}{h^2h'}
+ \frac{R_{\dl,(\pp\x\nn)/|\pp\x\nn|}(\pp/h \x \ttt)}{h'^2}
\notag \\
&= \rr - \frac{\nn\.\rr}{h'} (\pp \sinh\ze + h\nn(\cosh\ze - 1))
+ \text{$(\hel,\ttt)$-dependent terms}.
\label{eq:hardened-position} % (3.3)
\end{align}

The first three terms in \eqref{eq:hardened-position}, free of the
internal variables $\hel,\ttt$, look different from the transformation
rule for momentum; however, we shall soon see that they make
relativistic sense.

The action of the Poincar\'e group generators now follows from
\eqref{eq:tea-is-served} and~\eqref{eq:hardened-position}. They are
given in Table~\ref{tbl:Coad-orbital}.

For good measure, the infinitesimal actions are also given in
Table~\ref{tbl:coad-infml}.%
\footnote{The information is already contained in the Poisson
brackets, but it is good to cross-check them with the outcomes in
Table~\ref{tbl:Coad-orbital}.}
It helps to note that this action on $w$ follows the pattern of its
action on~$p$: namely, $(\ze \nn\.\KK) \lt h = \ze \nn\.\pp$ whereas
$(\ze \nn\.\KK) \lt w^0 = \ze \nn\.\ww$; $(\al \mm\.\LL) \lt w^0 = 0$;
$(\ze \nn\.\KK) \lt \ww = w^0 \ze \nn$ and 
$(\al \mm\.\LL) \lt \ww = \al\,\mm \x \ww$; and the other generators
act trivially on $w^0$ and~$\ww$.

The $(3,4)$-entry in Table~\ref{tbl:coad-infml} is found by expanding 
the right hand side of \eqref{eq:hardened-position} in powers 
of~$\ze$, using $\sinh\ze = O(\ze)$, $\cosh\ze - 1 = O(\ze^2)$, 
$\cos\dl = 1 + O(\ze^2)$, 
$(\pp\x\nn)\sinh\dl/|\pp\x\nn| = - \ze\nn \x \pp/h + O(\ze^2)$, and
$h/h' = 1 - \ze\nn\.\pp/h + O(\ze^2)$. Therefore:
\begin{align*}
\rr' &= \rr - (\ze\nn\.\pp/h) \rr 
+ \frac{(\ze\nn\.\pp) \rr - (\ze\nn\.\rr) \pp}{h}
+ \frac{\hel}{h^2}\,\ze\nn \x \pp - \frac{\pp \x \ttt}{h^3}
+ \frac{(\ze\nn\.\pp) \pp \x \ttt}{h^4}
\\
&\qquad + \frac{\pp/h \x \ttt}{h^2} 
- 2 (\ze\nn\.\pp/h)\,\frac{\pp/h \x \ttt}{h^2}
- \frac{(\ze\nn \x \pp/h) \x (\pp/h \x \ttt)}{h^2} + O(\ze^2)
\\
&= \rr - (\ze\nn\.\rr) \pp/h + \hel\,\ze\nn \x \pp/h^2
- \frac{(\ze\nn\.\pp) \pp \x \ttt + \ze\nn\.(\pp \x \ttt) \pp}{h^4}
+ O(\ze^2)
\\
&= \rr - (\ze\nn\.\rr) \pp/h + \hel\,\ze\nn \x \pp/h^2
- \ze\nn \x ((\pp \x \ttt) \x \pp)/h^4
- 2 \ze\nn\.(\pp \x \ttt)\,\pp/h^4 + O(\ze^2)
\\
&= \rr - (\ze\nn\.\rr) \pp/h + \ze\nn \x (\hel\pp - \ttt)/h^2
- 2 \ze\nn\.(\pp \x \ttt)\,\pp/h^4 + O(\ze^2).
\end{align*}

The shape of the orbit can be determined already: clearly $\rr$ takes
values in $\bR^3$, then $\pp$ takes values in 
$\bR^3 \less \{\zero\} \approx \bR \x \bS^2\,$. Then
$\hel \in(-\infty,\infty)$ and $\ttt$ takes values on a circle.
Therefore%
\footnote{There is now a coadjoint orbit that is not simply connected.
For non-simply connected coadjoint orbits it is hard to push forward
the Kirillov paradigm -- the known examples such as~\cite{AliKM00}
correspond to groups with trivial stability subgroups -- towards the
derivation of the unitary irreducible representations of the group.}
\begin{equation}
\sO_{m=0,\rho} \approx \bR^3 \x (\bR \x \bS^2) \x (\bR \x \bS^1).
\label{eq:unweeded-garden} % (3.4)
\end{equation}

\begin{table}[tbp]
\centering
\caption{The infinitesimal coadjoint action $X \lt u = \coad(X) u$}
\vspace{6pt}
\begin{tabular}{c|cccl}
\hline\hline
$u \diagdown X$ 
& $-a^0 H$ & $\aaa\.\PP$ & $\al \mm\.\LL$ & $\ze \nn\.\KK$
\rule{0pt}{12pt}
\\[\jot]
\hline
$\hel$ & $0$ & $0$ & $0$ & $\ze\nn\.\ttt/h$
\rule{0pt}{12pt}
\\[\jot]
$\pp$ & $\zero$ & $\zero$ & $\al\,\mm \x \pp$ & $h \ze\nn$
\\[\jot]
$\rr$ & $- a^0 \pp/h$ & $\aaa$ & $\al\,\mm \x \rr$
& $- (\ze\nn\.\rr)\,\pp/h + \ze\nn \x (\hel\pp - \ttt)/h^2
- 2 \ze\nn\.(\pp \x \ttt)\,\pp/h^4$
\\[2\jot]
$\ttt$ & $\zero$ & $\zero$ & $\al\,\mm \x \ttt$ 
& $-(\ze\nn\.\ttt) \pp/h$
\\[\jot]
\hline\hline
\end{tabular}
\label{tbl:coad-infml} % 4
\end{table}

% \S 3.1.1
\subsubsection{On the stability subgroup}
\label{sss:on-the-rest}

Choose any point $u = (\hel,\pp,\rr,\ttt)$ in $\pl^*$ subject to the
requirements $|\pp|^2 = h^2$ and $|\ttt|^2 = \rho^2 > 0$. To study its
coadjoint orbit under the Poincar\'e group, it is also instructive to
determine the isotropy subgroup $\sP_u$, since the orbit is just the
homogeneous space $\sP/\sP_u$.

The isotropy subgroups at different points on the orbit are conjugate,
so to find a ``representative'' isotropy group we choose a point where
$\rr = \zero$ and $\hel = 0$. We tackle Table~\ref{tbl:Coad-orbital}
one row at a time. From invariance of~$\hel$, excluding the case 
$\ze = 0$, we get $\nn\.\ttt = 0$ for the boost component of an
element of the stability subgroup. Therefore
\begin{equation}
\nn = a\pp/h + b(\pp \x \ttt)/\rho h, \word{with} a^2 + b^2 = 1,
\label{eq:tiny-turn} % (3.5)
\end{equation}
and $\uu \equiv (\pp \x \nn)/|\pp \x \nn| = \ttt/\rho$. Thus a boost
leaving $\hel$ invariant will leave $\ttt$ invariant as well. That is
indeed so, since then $R_{\dl,\uu}$ is a rotation around the direction
of $\ttt$~itself. The $R_{\al,\mm}$ component of that element
fixes~$\ttt$, too, so $\mm = \uu$. From the second row, we learn that
$a$ in equation~\eqref{eq:tiny-turn} equals $(1 - \cosh\ze)/\sinh\ze$
(necessarily $< 0$), in order to keep $|\pp|$ constant, whereby
$h' = h$ in this case. Since $\pp$ rotates around $\ttt$, this can be
compensated by an ordinary rotation around the same axis, with the
same rotation angle in the opposite direction (i.e., 
$\al\mm = -\dl\uu$). Finally, from the third row, the nonvanishing
terms of a boost action on $\rr = \zero$ and $\hel = 0$ produce
components along $\pp$ and $\pp\x\ttt$, which can be compensated by a
one-dimensional family of choices of $a^0$, and a suitable choice of
$\aaa(\ze)$ -- we need not give its complicated formula here. Notice
that in the present instance no purely-Lorentz solutions can be 
found.%
\footnote{There is a one-dimensional subspace of translations that
acts trivially: when $\ze = 0$, one can take $\al = 0$ as well, and
Table~\ref{tbl:Coad-orbital} shows that the condition 
$\aaa = a^0(\pp/h)$ yields invariance of~$u$ under the coadjoint
action. This is identical to what would occur for massive particles.}
The isotropy subgroup, freely parametrized by $a^0$ and~$\ze$, has the
topology of the plane~$\bR^2$.

% \S 3.1.2
\subsubsection{Simultaneity hyperplanes}
\label{sss:untoward}

The $r^a$ coordinates are not canonical, since they do not commute
among themselves. We next examine whether they are relativistically
covariant. Certainly they are covariant under Euclidean
transformations. The argument that follows, taken
from~\cite[Ch.~20]{SudarshanM74}, allows to understand the rule of
change when going from one Lorentz frame to another (say with primed
coordinates) by a boost. In a Hamiltonian formulation, the position
coordinates should be regarded as \textit{initial conditions} for free
motion. Thus consider $\rr(t) = \rr + (\pp/h)\,t$ and likewise
$\rr'(t') = \rr' + (\pp'/h')\,t'$, and \textit{assume} the standard
transformation rules under boosts:
\begin{equation}
t' = t \cosh\ze + \nn\.\rr(t) \sinh\ze; \quad 
\rr'(t') = \rr(t) + t \nn \sinh\ze + (\nn\.\rr(t)) \nn (\cosh\ze - 1).
\label{eq:run-along} % (3.6)
\end{equation}
We want to examine the resulting relation between 
$\rr'(t' = 0) \equiv \rr'$ and $\rr(t = 0) \equiv \rr$. Let us set
$t' = 0$, obtaining
$$
t = -(\nn\.\rr)\,\frac{h\sinh\ze}{h'} \word{and}
\rr(t) = \rr - (\nn\.\rr)\,\frac{\pp\sinh\ze}{h'}\,.
$$
Then in \eqref{eq:run-along} the second equation becomes:
\begin{align*}
\rr' &= \rr - (\nn\.\rr)\pp\,\frac{\sinh\ze}{h'} 
+ (\nn\.\rr) \nn \bigl( (\cosh\ze - 1)(1 - (\nn\.\pp)\sinh\ze/h') 
- h\sinh^2\ze/h' \bigr)
\\
&= \rr - (\nn\.\rr)\pp\,\frac{\sinh\ze}{h'} 
+ (\nn\.\rr)\nn\,\frac{h}{h'}\,
\bigl( \cosh\ze(\cosh\ze - 1) - \sinh^2\ze \bigr)
\\
&= \rr - (\nn\.\rr)\bigl( \pp\sinh\ze + h\nn(\cosh\ze - 1) \bigr)/h',
\end{align*}
reproducing the external coordinate part of the rule
\eqref{eq:hardened-position}, on the nose. In conclusion, the first
three terms are the expected ones for a structureless particle. Such
an expression as above does not relate two coordinatizations of the
same set of events, but the values of the position coordinates at the
simultaneity hyperplanes $t = 0$ and $t' = 0$ of the two frames
related by the boost. It renders the position coordinates' Lorentz
transformation behaviour in a formulation in which time has been
eliminated: that is to say, the transformations are regarded as acting
on the initial conditions (points of the coadjoint orbit) of a
covariant formulation -- see \cite[Sect.~3]{Ganymede} as well as the
discussion in \cite[Ch.~20]{SudarshanM74}.

In spite of the above, the $\rr$-position coordinates are \textit{not}
relativistically covariant, due to the internal variables. When $\hel$
is just a parameter and $\ttt$ is set to~$\zero$, we are in the
fixed-helicity context; and that is still the case. The phenomenon is
not new: for the massive particles with spin, one can find
\textit{both} (global) canonical and covariant position coordinates;
but they do not coincide. This has been known for a good
while~\cite{FatherAHP80}. The limit of the covariant coordinates $\xx$
in the massive case,
$$
\xx := - \frac{\kk}{h} - \frac{\pp \x \sss}{mh} 
= -\frac{\kk}{h} - \frac{\pp \x \ww}{m^2h}\,,
$$
as $m \downto 0$, $\sss \upto \infty$ is singular, at any rate. Local
canonical coordinates always exist, due to Darboux's theorem -- see 
the next subsection~\ref{ssc:as-the-world-turns}. It is an open question
whether covariant coordinates exist in our case; 
experience~\cite{Ganymede} suggests that it would be rewarding to
work with them.

% \S 3.2
\subsection{Slant and the symplectic structure}
\label{ssc:as-the-world-turns}

Since $\ttt$ takes values on a circle, it is most natural to regard it
as being given by an angle~$\th$, as is done in \cite{BargmannW48} for
a quantum counterpart. Let $\ttt_1(\pp),\,\ttt_2(\pp) \perp \pp$, with
moreover $\ttt_1(\pp) \perp \ttt_2(\pp)$, be chosen spacelike vectors
of length~$\rho$; and let us expand $\ttt$ in terms of these:
$$
\ttt = \al_1\,\ttt_1(\pp) + \al_2\,\ttt_2(\pp).
$$
Then $|\al_1|^2 + |\al_2|^2 = 1$, and we express $\ttt$ through 
$\al_1 = \cos\th$, $\al_2 = \sin\th$. Such an internal variable $\th$
we baptize here the \textit{slant}. The decomposition is clearly
non-unique, so some choices need to be made. Since $\pp$ and $\la$
Poisson-commute, it is natural to ask that $\{\hel,\ttt_1(\pp)\} = 0$.
In this way,
\begin{align*}
& \al_1(\pp/h \x \ttt_1(\pp)) + \al_2(\pp/h \x \ttt_2(\pp))
= \pp/h \x \ttt
\\
&\qquad = - \{\hel,\ttt\} 
= - \{\hel,\cos\th\}\,\ttt_1(\pp) - \{\hel,\sin\th\}\,\ttt_2(\pp)
- \sin\th \{\hel,\ttt_2(\pp)\}.
\end{align*}
This leads us to choose $\ttt_2(\pp) = \pp/h \x \ttt_1(\pp)$,
therefore $\{\hel,\ttt_2(\pp)\} = 0$ and
$\ttt_1(\pp) = -\pp/h \x \ttt_2(\pp)$, so that
$$
\pp/h \x \ttt = \ttt_2(\pp) \cos\th - \ttt_1(\pp) \sin\th
\word{and} 
\{\hel,\cos\th\} = \sin\th, \quad \{\hel,\sin\th\} = -\cos\th,
$$
that is, $\{\hel,e^{i\th}\} = -ie^{i\th}$. In conclusion, $\hel$ and
$\th$ are symplectically conjugate variables.

Now consider the Poisson brackets involving components of $\rr$
and~$\ttt$. Given that $\{\rr,\hel\} = \zero$, the natural choice is
to take $\{r^a, \cos\th\} = \{r^a, \sin\th\} = 0$, so that the spatial
and internal coordinates symplectically decouple completely. This is
consistent: the only check that we have on our choices so far is that
$\{r^a,\ttt\} = -t^a \pp/h^2$, from~\eqref{eq:taking-position}. This
can be satisfied on deciding for
\begin{align*}
\{r^a, \ttt_1(\pp)\} 
&= - (t^a \cos\th/h^2 + (\pp \x \ttt)^a \sin\th/h^3)\,\pp, \quad
\text{implying}
\\
\{r^a, \ttt_2(\pp)\} 
&= - (t^a \sin\th/h^2 - (\pp \x \ttt)^a \cos\th/h^3)\,\pp; \quad
\text{and}
\\
\{r^a, \ttt\} &= -(\cos^2\th + \sin^2\th)\,t^a\pp/h^2.
\end{align*}
It should be remarked that the moving-frame component of $\rr$ along
$\ttt$ alone does not Poisson-commute with $\ttt$:
$$
\{\pp\.\rr, \ttt\} = \zero = \{[\pp,\ttt,\rr], \ttt\}; \quad
\{\ttt\.\rr, \ttt\} = -(\rho^2/h^2)\,\pp.
$$

\medskip

Schwinger points in~\cite{Schwinger70} to the duality of the spinning
massless relativistic problem with that of an electrically charged
particle in the distant field of a stationary magnetic charge. This
was further explored by Bacry~\cite{Bacry88}. Indeed, in view of
Lemma~\ref{lm:taking-position}, the spatial component of the Poisson
brackets for our problem in the $(r^a,p^b)$ coordinates is of the form
$$
\{f,g\}(\rr,\pp) = \frac{\del f}{\del r^a}\,\frac{\del g}{\del p_a}
- \frac{\del f}{\del p_a}\,\frac{\del g}{\del r^a} 
- \eps^{abc} \hel \frac{p_c}{h^3}\,
\frac{\del f}{\del r^a}\,\frac{\del g}{\del r^b};
$$
where in the fixed-helicity case $\hel$ is replaced by a number.%
\footnote{Since the $\rr$ and $\pp$-variables are conjugate, by
Lemma~\ref{lm:taking-position}, in the rest of this section we shall
write $p_b$ rather than~$p^b$ for the components of~$\pp$, as is
customary.}
We have arrived at a Poisson matrix of the form
$$
\begin{pmatrix}
0 & 1 & -\hel p_3/h^3 & 0 & \hel p_2/h^3 & 0 & 0 & 0 \\
-1 & 0 & 0 & 0 & 0 & 0 & 0 & 0 \\
\hel p_3/h^3 & 0 & 0 & 1 & -\hel p_1/h^3 & 0 & 0 & 0 \\
0 & 0 & -1 & 0 & 0 & 0 & 0 & 0 \\
-\hel p_2/h^3 & 0 & \hel p_1/h^3 & 0 & 0 & 1 & 0 & 0 \\
0 & 0 & 0 & 0 & -1 & 0 & 0 & 0 \\
0 & 0 & 0 & 0 & 0 & 0 & 0 & 1 \\
0 & 0 & 0 & 0 & 0 & 0 & -1 & 0 \end{pmatrix};
$$
and the helicity $\hel$ plays a role dual to a magnetic monopole
charge. That is to say, the symplectic form on the WP orbits is given
by
$$
\om = dr^a \w dp_a - \epsii_{ab}^c \,\hel p_c h^{-3} \,dr^a \w dr^b
+ d\hel \w d\th.
$$
Thus we have obtained a Schwinger--Bacry structure, by now well known
\cite{Soloviev16a,Soloviev16b,Soloviev17}. A quaternionic setting for
it was devised by Emch and Jadczyk in~\cite{EmchJ98}, further explored
by Cari\~nena, Marmo and three of us in~\cite{Monachia,Ravel} and by
Soloviev in~\cite{Soloviev16b,Soloviev17}. It would be most
interesting to know whether it is relevant for the Wigner particle.%
\footnote{The simpler ordinary ``magnetic'' Poisson bracket has been
the object of several studies leading up to an (already rather
inexplicit) magnetic Weyl--Moyal product
\cite{MantoiuP04,KarasevO05}.}

Since the ``magnetic terms'' drop out from $\om^{\w 4}$, a Liouville
measure on the orbit is immediately seen to be
$$
\mu := \om^{\w 4} 
\propto dr^1 \w dr^2 \w dr^3 \w dp_3 \w dp_2 \w dp_1 \w d\hel \w d\th.
$$

By general theory, Darboux or canonical coordinates $q^a$ do always
exist \textit{locally}~\cite{AbrahamM87}. To find them in our case,
the systematic method advocated in \cite{KurnyavkoS16} is not
necessary. For the $6 \x 6$ principal submatrix here they are computed
by the standard physical procedure of defining a ``vector potential''
$\VV(\pp)$ such that
$$
\del_b V_a - \del_a V_b = - \epsii_{ab}^c \,\hel p_c h^{-3}
= \epsii_{ab}^c\,\hel\,\del_c(h^{-1}).
$$
Then $q^a := r^a + V^a$ foots the bill. An example is:
$\VV(\pp) = (p_2,-p_1,0)/|\pp|(p_3 + |\pp|)$.%
\footnote{From this (or a similar) formula it should be clear that
under quantization one expects functional-analytic complications in
the ket~space -- of the kind discussed in~\cite{FlatoSF83}.}

% \S 4
\section{Conclusion and outlook}
\label{sec:open-season}

The renewed excitement on the Wigner particles is a refreshing novelty
in what seemed a foreclosed issue: the identification of physical
particles. Although it is true that the massless WPs analyzed in this
paper are not established members of the present zoo of particles, the
fact that they have been relatively little studied leaves the
possibility of important surprises wide open. While the final aim is a
consistent quantum field theory~\cite{Rehren17}, the Kirillov orbit
method we have described here is a prelude to a full-fledged first
quantization of these particles.

The nonzero Poisson bracket among the coordinates
\eqref{eq:taking-position} also makes the configuration space a
``noncommutative geometry''. One further step in that direction would
be to proceed along the lines of deformation quantization, and present
a Groenewold--Moyal star product, like the ones obtained in
\cite{Ganymede}. This could be done with a generalization of the
invertible Wigner transform to this case; it is a nontrivial task. It
is likely that the definition of a product to low orders in~$\hbar$
\textit{\`a~la} Kontsevich may be more tractable. We recall that a
similar problem is found for the quantization of particles in a
magnetic monopole background
\cite{Soloviev16a,Soloviev16b,Soloviev17,EmchJ98,Monachia,Ravel}. A
related issue is the existence of covariant coordinates, and their
eventual role in a quantization procedure.

Finally: with a few exceptions~\cite{BekaertNS16,KhabarovZ17}, so far
most of the studies of WPs have been made for the bosonic case.
Relatively little is known about them in the spin case, whose
quantization may be of interest.

\appendix

% \S A
\section{The Poincar\'e Lie algebra}
\label{app:P-Lie-structure}

We briefly summarize our notational conventions for the generators of
the Poincar\'e group.

We use the Minkowski metric whose inner product of $4$-vectors
$x = (x^0,\xx)$, $y = (y^0,\yy)$ is denoted by parentheses:
$(xy) = x^\mu y_\mu := x^0y^0 - \xx\.\yy$. As usual, we write 
$x^2 = (xx) = x^\mu x_\mu$.

The (restricted) Poincar\'e group $\sP$ is the semidirect
product $T_4 \rtimes \sL^\up_+$, with multiplication written as
$(a,\La) \cdot (a',\La') = (a + \La a', \La\La')$. Its Lie 
algebra~$\pl$ has a basis $\{H,P^a,L^a,K^a : a = 1,2,3\}$,
corresponding respectively to time translations, space translations,
rotations and boosts. The nonzero commutation relations are as
follows:
\begin{alignat}{3}
[L^a,L^b] &= \epsi^{ab}_c\, L^c,  \qquad 
& [L^a,K^b] &= \epsi^{ab}_c\, K^c,  \qquad
& [K^a,K^b] &= -\epsi^{ab}_c\, L^c,
\notag \\
[L^a, P^b] &= \epsi^{ab}_c\, P^c,  \qquad 
& [P^b, K^a] &= \dl^{ab} H, \qquad
& [H, K^a] &= P^a.
\label{eq:comm-relns} % (A.1)
\end{alignat}
The Lorentz-subgroup generators are also denoted by
$J^{0a} := K^a$, $J^{ab} = \epsi^{ab}_c L^c$. Note as well that
$J_{0a} = - J^{0a} = - K^a$, $J_{ab} = J^{ab} = \eps_{abc} L^c$,
and both $J^{\mu\nu}$ and $J_{\rho\sg}$ are skewsymmetric tensors.
The commutation relation of the latter may be summarized as 
$$
[J_{\rho\sg}, J_{\mu\nu}] 
= -g_{\rho\mu} J_{\sg\nu} - g_{\sg\nu} J_{\rho\mu} 
+ g_{\sg\mu} J_{\rho\nu} + g_{\rho\nu} J_{\sg\mu} \,.
$$

The dual tensor 
$J^{*\rho\mu} := -\half \eps^{\rho\mu\nu\tau} J_{\nu\tau}$ plays a
role in the theory of the WP. Here $J^{*0a} = - L^a$ and
$J^{*ab} = \epsi^{ab}_c K^c$. Remark that 
$$
\KK \cdot\LL = \half J_{\rho\mu} J^{*\rho\mu}  \word{and}
\KK^2 - \LL^2 = \half J_{\rho\mu} J^{\rho\mu} 
= -\half J^*_{\rho\mu} J^{*\rho\mu}
$$
are the Casimirs of the Lorentz group.

% \S B
\section{A pedestrian approach to the ``Pauli--Luba\'nski limit''}
\label{app:PL-Schwinger}

In our classical context, it is possible to obtain the WP data by
carefully taking the $m \downto 0$, $\sss \upto \infty$ limit (with
$m\sss$ finite). This we do essentially following Schwinger,
\textit{mutatis mutandis}. To the purpose, let us go back to equation
\eqref{eq:mass-spin}, where it is clear that
$$
\lim_{m\downto 0,\,\sss\upto\infty} m\sss = \ttt \perp \pp.
$$
Clearly as well, from $\{s^a,s^b\} = \epsi^{ab}_c s^c$ we infer
$\{t^a,t^b\} = 0$ and $\{\hel,\ttt\} = -\pp/h \x \ttt$, as well as the
other brackets in Lemma~\ref{lm:tea-table}. The interesting part is
that instead of $\qq$ as given in~\eqref{eq:take-a-position}, which is
ill-defined in the $m \downto 0$, $\sss \upto \infty$ limit, one can
define already an $\rr$-vector in the massive case enjoying a smooth
Pauli--Luba\'nski limit by
$$
\rr := \qq + \frac{\pp \x \sss}{h^2} 
= - \frac{\kk}{h} - \frac{\pp \x \sss}{h(m + h)}
+ \frac{\pp \x \sss}{h^2} 
= - \frac{\kk}{h} + \frac{\pp \x m\sss}{h^2(m + h)}\,;
$$
and already \eqref{eq:keep-apart} holds, too, as well as the
``magnetic'' commutation relations between the $r^a$ given
in~\eqref{eq:taking-position}:
\begin{align*}
\{r^a,r^b\} &= \{q^a, \epsi^b_{ef}\, p^e s^f h^{-2}\}
+ \{\epsi^a_{cd}\, p^c s^d h^{-2}, q^b\} 
+ \epsi^a_{cd}\,\epsi^b_{ef}\, p^c p^e \{s^d, s^f\} h^{-4}
\\
&= -2 \epsi^{ab}_c s^c h^{-2} - 2 \epsi^b_{cd}\, p^a p^c s^d h^{-4}
+ 2 \epsi^a_{cd}\, p^b p^c s^d h^{-4} 
+ \epsi^a_{cd} \epsi^b_{ef} \epsi^{df}_g\, p^c p^e s^g h^{-4}
\\
&= -2 \epsi^{ab}_c s^c h^{-2} - 2 p^a (\pp \x \sss)^b h^{-4}
+ 2 p^b (\pp \x \sss)^a h^{-4} + \epsi^{ab}_c \, p^c w^0 h^{-4},
\end{align*}
on using $w^0 = \pp\.\sss$ in the massive case. Since 
$\pp \x (\pp \x \sss) = w^0 \pp - h^2 \sss$, this reduces to
$$
\{r^a,r^b\} = \epsi^{ab}_c \bigl( -2 s^c h^{-2}
- 2(w^0 p^c - h^2 s^c) h^{-4} + w^0 p^c h^{-4} \bigr)
= - \epsi^{ab}_c\, w^0 p^c h^{-4}
= - \epsi^{ab}_c\, \hel p^c h^{-3},
$$
in agreement with the last formula of Lemma~\ref{lm:taking-position}.

\subsection*{Acknowledgements}
We are grateful for discussions with Alejandro Jenkins -- and his
astute comments regarding the gyroscope. We thank the referee for a
careful reading and for suggesting welcome clarifications. JMG-B and
JCV heartily thank the Universit\`a di Napoli \textit{Federico~II} and
INFN Sezione di Napoli for its hospitality, and JMG-B wishes also to
thank the ZiF at the University of Bielefeld for its warm welcome at
the final stage of the paper. JCV thanks the Instituto de F\'isica
Te\'orica UAM--CSIC for providing a splendid working environment at
Madrid, during the late stages of writing.

The project has received funding from the European Union's
Horizon~2020 research and innovation programme under the Marie
Sk{\l}odowska-Curie grant agreement No.~690575. JMG-B, FL and PV
acknowledge the support of the COST action QSPACE; FL and PV
acknowledge as well the Iniziativa Specifica GeoSymQFT of the INFN. FL
acknowledges the Spanish MINECO under project MDM--2014--0369 of ICCUB
(Unidad de Excelencia `Mar\'ia de Maeztu'). JMG-B received funding
from Project FPA2015--65745--P of MINECO/Feder. JCV acknowledges
support from the Vicerrector\'ia de Investigaci\'on of the Universidad
de Costa~Rica.

\end{document}